\newtheorem{theorem}{Theorem}
\newtheorem{definition}{Definition}
\newtheorem{proposition}{Proposition}
\newtheorem{example}{Example}
\begin{document}

\title{Practical Encoders and Decoders for Euclidean Codes from Barnes-Wall Lattices}
\author{J. Harshan*, Emanuele Viterbo*\thanks{*The authors are with the Department of Electrical and Computer Systems Engineering, Monash University, Melbourne, Australia-3168. Email:harshan.jagadeesh@monash.edu, emanuele.viterbo@monash.edu. $^{\dagger}$The author is with the Department of Communications and Electronics, Telecom ParisTech, Paris, France, Email:belfiore@enst.fr.
Parts of this work are in the proceedings of IEEE International Symposium on Information Theory (ISIT) 2012 held at Cambridge, MA, USA, and International Symposium on Mathematical Theory of Networks and Systems (MTNS) 2012 held at Melbourne, Australia.}, and Jean-Claude Belfiore$^{\dagger}$}

\maketitle
 
\begin{abstract}
In this paper, we address the design of high spectral-efficiency Barnes-Wall (BW) lattice codes which are amenable to low-complexity decoding in additive white Gaussian noise (AWGN) channels. We propose a new method of constructing complex BW lattice codes from linear codes over polynomial rings, and show that the proposed construction provides an explicit method of bit-labelling complex BW lattice codes. To decode the code, we adapt the low-complexity sequential BW lattice decoder (SBWD) recently proposed by Micciancio and Nicolosi. First, we study the error performance of SBWD in decoding the infinite lattice, wherein we analyze the noise statistics in the algorithm, and propose a new upper bound on its error performance. We show that the SBWD is powerful in making correct decisions well beyond the packing radius. Subsequently, we use the SBWD to decode lattice codes through a novel noise-trimming technique. This is the first work that showcases the error performance of SBWD in decoding BW lattice codes of large block lengths.
\end{abstract}
 

\begin{keywords}
Barnes-Wall lattices, lattice codes, low-complexity lattice decoders.
\end{keywords}


\section{Introduction}
\label{sec1}

Ever since random coding schemes were demonstrated to approach the capacity of additive white Gaussian noise (AWGN) channels \cite{CoT}, enormous research has taken place to find \emph{structured} coding schemes which can accomplish the same job. The need for structured coding schemes is to facilitate simpler analysis of the code structure and to achieve reduced complexity in encoding and decoding. 
A well known method of obtaining structured codes is to carve out a finite set of lattice points from dense lattices \cite{SFS}-\cite{FoU}. Such codes are referred to as lattice codes, and are usually obtained as a set of coset representatives of a suitable quotient lattice. Further, the lattice codes have the advantage of inheriting most of the code properties from the parent lattice, and as a result, the choice of the lattice is crucial to the performance of the code.

\subsection{Motivation and contributions}

\indent In this paper, we are interested in carving lattice codes from Barnes-Wall (BW) lattices \cite{BaW}, \cite{WJP}. Our goal is to construct efficient BW lattice codes of large block lengths which work with low-complexity encoders and decoders. 
In particular, efficient low-complexity decoders for complex BW lattices are readily available in \cite{MiN}, \cite{GrP}. Therefore, if lattice codes from complex BW lattices are employed for communication over AWGN channels, then the decoders of \cite{MiN}, \cite{GrP} can be used to recover information with low computational complexity.

In \cite{MiN}, the authors have proposed two low-complexity implementations of the bounded distance decoder for BW lattices, namely ({\em i}) the sequential bounded distance decoder, and ({\em ii})~the parallel bounded distance decoder. Inspired by the parallel decoder in \cite{MiN}, list decoders for BW lattices have been recently proposed in \cite{GrP}. We note that the parallel decoders of \cite{MiN} and \cite{GrP} have low-complexity only when implemented on sufficiently large number of parallel processors. If the above decoders are implemented on a single processor, then the complexity advantages are lost, and specifically, the complexity of the list decoder grows larger than that of the sequential decoder in \cite{MiN}. Since we are interested in lattice codes of large block lengths, we focus on the sequential bounded distance decoder which seems more suitable for implementation (see Section \ref{sec5_subsec2} for more details on the complexity advantages of sequential bounded distance decoder over the list decoder). The sequential decoder in \cite{MiN} was proven to correct any error up to the packing radius. However, the possibility of correct decoding is not known when the received vector falls outside the bounded decoding ball of packing radius. In a nutshell, the exact error performance of the decoder is not known. The existence of this low-complexity decoder has motivated us to study its error performance, and use it to decode BW lattice codes. We refer to this decoder as the sequential BW lattice decoder (SBWD). The contribution of this paper on the construction and decoding of complex BW lattices are given below.

\begin{enumerate}
\item We introduce Construction $A^{\prime}$ of lattices which enables us to generate some well structured lattices from linear codes over \emph{polynomial rings} \cite{HVB}. As an immediate application, we apply Construction $A^{\prime}$ to obtain BW lattices of dimension $2^{m}$ for any $m \geq 1$. The proposed method is yet another construction of BW lattices (shown in Section \ref{sec3}) and shows a new connection between codes over polynomial rings and lattices. We show that the proposed construction provides an explicit method of obtaining and bit-labelling complex BW lattice codes. 
\item We study the error performance of the SBWD in AWGN channels. Since the error performance of the SBWD depends on the error performance of the underlying soft-input Reed-Muller (RM) decoders, we study the error performance of the soft-input RM decoder as used in the SBWD. First, we use the Jacobi-Theta functions \cite{theta_func} to characterize the virtual binary channels that arise in the decoding process. Subsequently, we study the noise statistics in the algorithm, and provide an upper bound on the error performance of the soft-input RM decoders. Through computer simulations, we obtain the error performance of the SBWD, and show that the decoder is powerful in making correct decisions well beyond the packing radius \cite{HVB1} (see Table I in Section \ref{sec5} for the effective radius of the SBWD decoder). This is the first work that showcases the error performance of SBWD in decoding BW lattices of large block lengths. 
\item To decode the lattice code in AWGN channels, we employ the SBWD along with a noise trimming technique, wherein the components of the received vector are appropriately scaled before passing it to the SBWD. With the noise-trimming technique, the SBWD is forced to decode to a codeword in the code which in turn improves the error performance. We refer to this decoder as the \emph{BW lattice code decoder} (BWCD). We obtain the bit error rate (BER) of the BWCD for codes in complex dimension $4, 16$, and $64$, and show that the BWCD outperforms the SBWD by $0.5$ dB.
\end{enumerate}

\subsection{Prior work on Barnes-Wall lattices}
\label{sec1_subsec1}

The BW lattices \cite{BaW} is a special family of $N$-dimensional lattices that exist when $N$ is a power of $2$. These lattices were originally discovered as a solution to finding extreme quadratic forms in 1959 \cite{BaW}. Only in 1983, the now well known connection between BW lattices and Reed-Muller codes was discovered by \cite{SlB}. This connection is found in several works \cite{CoS}, \cite{GDF}, \cite{SaA} in different forms. Other than its construction through Reed Muller codes, the generator matrices of the BW lattices are also known to be obtained though Kronecker products \cite{NRS}, \cite{MiN}. 

\indent In 1989, G.D. Forney proposed a low-complexity bounded distance decoding algorithm for Leech lattices \cite{Forney_bd}. As a generalization, in the same paper, a similar algorithm has been shown to be applicable in decoding all Construction $D$ lattices. As BW lattices are known to be obtained through Construction $D$ \cite{CoS}, bounded distance decoders for BW lattices were known in principle since \cite{Forney_bd}. Only in the 1990's, explicit bounded distance decoders for BW lattices were implemented for dimension up to $32$, and numerical results on the error performance were reported \cite{MoJ}, \cite{Yucel}. In 2008, Micciancio and Nicolosi \cite{MiN} have proposed two low-complexity implementations of bounded distance decoder for BW lattices, namely ({\em i}) the sequential bounded distance decoder, and ({\em ii}) the parallel bounded distance decoder. If $N = 2^{m}$ denotes the dimension of complex BW lattice, the worst-case complexity of the decoders has been shown to be $O\left(N\mbox{log}^{2}(N)\right)$ and $O\left(\mbox{log}^{2}(N)\right)$ for the fully sequential decoder and the fully parallel decoder, respectively. For the fully sequential decoder, the algorithm is assumed to be implemented on a single processor, whereas for the fully parallel decoder, the algorithm is assumed to be implemented on $N^{2}$ parallel processors. Inspired by the fully parallel implementation in \cite{MiN}, list decoders for BW lattices have been recently proposed in \cite{GrP} where the list decoder outputs a list of BW lattice points within any given radius from the target vector. The complexity of the list decoder is shown to be polynomial in the dimension of the lattice, and polynomial in the list size, which is a function of the Euclidean radius. Note that the SBWD exploits the Construction $D$ structure of BW latices as a multilevel code of nested binary Reed-Muller (RM) codes, and decodes each RM code through a successive interference cancellation technique. On the other hand, the list decoder does not exploit construction $D$ structure of BW lattices, and hence, does not need the support of any soft-input RM decoders. 

The rest of this paper is organized as follows: In Section \ref{sec2}, we provide a background on lattice constructions from linear codes. In Section \ref{sec3}, we introduce Construction $A^{\prime}$ of complex BW lattices. In Section \ref{sec4}, we study the error performance of the SBWD,  while in Section \ref{sec5} and Section \ref{sec6}, we use the SBWD to decode the BW lattice code. Finally, in Section \ref{sec7}, we conclude this paper and provide some directions for future work. 

\textit{\textbf{Notations}:} Throughout the paper, boldface letters and capital boldface letters are used to represent vectors and matrices, respectively. For a complex matrix $\textbf{X}$, the matrices $\textbf{X}^T$, $\Re(\textbf{X})$ and $\Im (\textbf{X})$ denote the transpose, real part and imaginary part of $\textbf{X}$, respectively. The set of integers, real numbers, and complex numbers are denoted by ${\mathbb Z}$, $\mathbb{R}$, and ${\mathbb C}$, respectively. We use $i$ to represent $\sqrt{-1}.$ For an $n$-length vector $\textbf{x}$, we use $x_{j}$ to represent the $j$-th component of $\textbf{x}$. Cardinality of a set $\mathcal{S}$ is denoted by $|\mathcal{S}|$. Magnitude of a complex number $x$ is denoted by $|x|$. The number of ways of picking $n$ objects out of $m$ objects is denoted by $C^{m}_{n}$. The symbol $\lceil\cdot\rfloor$ denotes the nearest integer of a real number, and we set $\lceil a + 0.5 \rfloor = a$ for any $a \in \mathbb{Z}$. Finally, we use $\mbox{Pr}(\cdot)$ to denote the probability operator.

\section{Background on Lattice Construction using Linear Codes}
\label{sec2}

A complex lattice $\Lambda$ over $\mathbb{Z}[i]$ is a discrete subgroup of $\mathbb{C}^{n}$ \cite{CoS}. Alternatively, $\Lambda$ is a $\mathbb{Z}[i]$-module generated by a basis set $\{ \textbf{v}_{1}, \textbf{v}_{2}, \ldots, \textbf{v}_{n} ~|~ \textbf{v}_{j} \in \mathbb{C}^{n} \}$ as $\Lambda = \left\lbrace \sum_{j = 1}^{n} q_{j}\textbf{v}_{j} ~|~ \forall q_{j} \in \mathbb{Z}[i] \right\rbrace.$ 
It is well known that dense lattices can be obtained via binary linear codes \cite{CoS}. Depending on the structure of the underlying linear codes, lattice construction can be categorized into different types. In this section, we recall two well known constructions for the case of complex lattices~\cite{CoS}.

\begin{flushleft}
\textbf{Construction $A$:}\\
\end{flushleft}

\begin{definition}
\label{def0_1}
A complex lattice $\Lambda$ is obtained by Construction $A$ from the binary linear code $\mathcal{C}$ if $\Lambda$ can be represented as
\begin{equation}
\label{cons_A}
\Lambda= (1+i) \mathbb{Z}[i]^{n} \oplus \mathcal{L}_{0},
\end{equation}
where $\mathcal{L}_{0} = \{ \psi (\textbf{c}) ~|~ \forall \textbf{c} \in \mathcal{C} \} \subseteq \Lambda$ is a lattice code obtained by the component-wise mapping $\psi:\mathbb{F}_{2} \rightarrow \mathbb{Z}[i]$ given by $\psi (0) = 0$ and $\psi (1) = 1$ on the alphabet of $\mathcal{C}$, where $\mathbb{F}_{2} = \{0, 1\}$.  
\end{definition}

\begin{flushleft}
\textbf{Construction $D$:}\\
\end{flushleft}

\begin{definition}
\label{def_cons_d}
A complex lattice $\Lambda$ is obtained by Construction $D$ from a family of nested binary linear codes $\mathcal{C}_{m-1} \supseteq \mathcal{C}_{m-2} \supseteq \ldots \supseteq \mathcal{C}_{1} \supseteq \mathcal{C}_{0}$ if $\Lambda$ can be represented as
\begin{equation}
\label{cons_D}
\Lambda= (1+i)^{m} \mathbb{Z}[i]^{n} \oplus (1+i)^{m-1}\mathcal{L}_{m-1} \oplus \cdots \oplus (1+i)\mathcal{L}_{1} \oplus \mathcal{L}_{0},
\end{equation}
where $\mathcal{L}_{j} = \{ \psi (\textbf{c}) ~|~ \forall \textbf{c} \in \mathcal{C}_{j} \}$ is obtained by the component-wise mapping $\psi:\mathbb{F}_{2} \rightarrow \mathbb{Z}[i]$ given by $\psi (0) = 0$ and $\psi (1) = 1$ on the alphabet of $\mathcal{C}_{j}$.  
\end{definition}

\indent A BW lattice can be obtained via construction $D$ as a $\mathbb{Z}[i]$ lattice as follows \cite{GDF}. Suppose we want to construct the complex lattice $BW_{2^{m}}$ of dimension $2^{m}$ where $m \geq 1$, let $\mathcal{RM}(r, m)$ be the binary Reed-Muller (RM) code (Sec. 3.7, Ch. 3, \cite{Blahut}) of length $2^m$ and of order $0 \leq r \leq m$. Then, $BW_{2^{m}}$ can be constructed as
\begin{equation}
\label{construction_d}
BW_{2^{m}} = \left \{ (1+i)^{m} \textbf{a} + \sum_{r=0}^{m-1} (1+i)^{r}\psi(\textbf{c}_{r}) ~|~ \forall \textbf{c}_{r} \in \mathcal{RM}(r, m), \forall \textbf{a} \in \mathbb{Z}[i]^{2^{m}} \right\}
\end{equation}
where $\psi(\cdot)$ is as given in Definition \ref{def_cons_d}. For notational convenience, we also write \eqref{construction_d} as
\begin{equation}
\label{construction_d_easy}
BW_{2^{m}} = (1+i)^{m}\mathbb{Z}[i]^{2^{m}} \oplus \bigoplus_{r=0}^{m-1} (1+i)^{r}\mathcal{RM}(r, m).
\end{equation}
This method generates $BW_{2^{m}}$ as a multi-level structure of nested RM codes and hence it falls under Construction $D$ \cite{CoS}. 

\begin{flushleft}
\textbf{Generalized construction $A$ \cite{AmF} \cite{FoV}:}\\
\end{flushleft}

\indent Apart from Construction $D$, the BW lattice codes can be obtained by the generalized construction $A$. For the complex BW lattice $BW_{2^{m}}$, let $\textbf{G}_{\small{\mbox{BW}}} \in \mathbb{C}^{N \times N}$ denote the generator matrix in the triangular form, where the rows $\{ \textbf{g}_{1}, ~\textbf{g}_{2}, \ldots, \textbf{g}_{N} \}$ of $\textbf{G}_{\small{\mbox{BW}}}$ forms a basis set of $BW_{2^{m}}$, where $N = 2^{m}$. Let $d_1, d_{2}, \ldots ,d_N$ represent the diagonal elements, where $d_{j} = (1+i)^{m_{j}}$ for some integer $m_{j} \geq 0$, and $d = (1+i)^{\max m_{j}}$. For this lattice construction, one can easily map binary data to lattice points as follows:
\begin{enumerate}
\item \textbf{Bit Labelling:} Map $\mbox{log}_{2}(L_{j})$ information bits to $a_{j} \in \mathbb{Z}[i] /p_{j}\mathbb{Z}[i]$ where $p_{j} = \frac{d}{d_j}$ and $L_{j}$ is the cardinality of $\mathbb{Z}[i] /p_{j}\mathbb{Z}[i]$.
\item \textbf{Encoding:} Using $\{ a_{1}, a_{2}, \ldots, a_{N} \}$, a lattice point is obtained as $\sum_{j = 1}^{N} a_{j} \textbf{g}_{j}$.
\item \textbf{Shaping:} Since $\Lambda = d \mathbb{Z}[i]^{N} + \mathcal{L}$, a lattice point within $\mathcal{L}$ can be obtained as $\bar{\textbf{x}} = \textbf{x}$ mod $d \mathbb{Z}[i]^{N}$. 
\end{enumerate}

\begin{flushleft}
\textbf{Motivation for Construction $A^{\prime}$:}\\
\end{flushleft}

\indent In the bit-labelling step above, binary digits have to be mapped to the symbols of $\mathbb{Z}[i] /p_{i}\mathbb{Z}[i]$. Some of the well-known bit-labelling methods include gray-mapping and set-partitioning based methods.\footnote{Unlike uncoded communication, gray-mapping on $\mathbb{Z}[i] /p_{i}\mathbb{Z}[i]$ is not necessarily optimal since it does not guarantee that the neighbouring lattice points in the lattice code are separated by maximum number of information bits. Efficient bit labelling of lattice codes is a separate problem of its own and is out of the scope of this work.} Unlike the case of real integer lattice, $\mathbb{Z}[i] /p_{i}\mathbb{Z}[i]$ is an arbitrary subset of $\mathbb{Z}[i]$, and bit mapping to $\mathbb{Z}[i] /p_{j}\mathbb{Z}[i]$ is not straightforward unless the set of representatives for $\mathbb{Z}[i] /p_{j}\mathbb{Z}[i]$ is chosen with good shaping property. Through Construction $A^{\prime}$, we facilitate bit-labelling on complex integers by using the truncated binary expansion of the elements of $\mathbb{Z}[i] /p_{j}\mathbb{Z}[i]$ over the base $1+i$ \cite{NiK}. With this, the bits labelled on $a_{j} \in \mathbb{Z}[i] /p_{j}\mathbb{Z}[i]$ are nothing but the bits in the truncated binary expansion of $a_{j}$. To assist the bit-labelling step, we use polynomial rings over $\mathbb{F}_{2}$ to represent the elements of $\mathbb{Z}[i] /p_{j}\mathbb{Z}[i]$. For the encoding step, we use a linear code over polynomial rings, and obtain the lattice points as embedding of the codewords a of linear code into the Euclidean space. Finally, for the shaping step, we propose an appropriate mapping on $\mathbb{Z}[i]$ which provides a label code with appropriate shaping property, i.e., we explicitly provide a method of bit-labelling complex BW lattices.
Our construction is an extension of Construction $A$ and hence we refer to it as Construction $A^{\prime}$. We now define polynomial rings and codes over polynomial rings. 

\begin{definition}
(Ch. 4 in \cite{Blahut})
We define the polynomial quotient ring $\mathcal{U}_{m} = \mathbb{F}_{2}[u] \diagup u^{m}$ in variable $u$ for any $m \geq 1$ as 
\begin{equation*}
\mathcal{U}_{m} = \left \lbrace \sum_{k = 0}^{m-1} b_{k} u^{k} \mbox{ mod } u^{m} ~|~ b_{k} \in \mathbb{F}_{2} \right \rbrace,
\end{equation*}
with regular polynomial addition and multiplication over $\mathbb{F}_{2}$ coefficients along with the quotient operation $u^{m} = 0,$ which is equivalent to cancelling all the terms of degree greater than or equal to $m$.
\end{definition}

\begin{definition}
A linear code $\mathcal{C}$ over $\mathcal{U}_{m}$ is a subset of $\mathcal{U}_{m}^{n}$ which can be obtained through a generator matrix $\textbf{G} \in \mathcal{U}_{m}^{k \times n}$ as
\begin{equation*}
\mathcal{C} = \{ \textbf{z} \textbf{G} ~|~ \forall \textbf{z} \in \mathcal{U}_{m}^{k} \},
\end{equation*}
for some $k \leq n$ and the matrix multiplication is over the ring $\mathcal{U}_{m}$.
\end{definition}

\section{Construction $A^{\prime}$ of BW Lattice}
\label{sec3}

We now introduce Construction $A^\prime$ in the following definition.

\begin{definition}
\label{def1}
A complex lattice $\Lambda$ is obtained by Construction $A^{\prime}$ from a linear code $\mathcal{C}$ over $\mathcal{U}_{m}$ for some $m \geq 1$ if $\Lambda$ can be written as
\begin{equation}
\label{cons_A}
\Lambda = \Phi(u^{m}) \mathbb{Z}[i]^{n} + \mathcal{EC},
\end{equation}
where $\mathcal{EC} = \{ \Phi (\textbf{c}) ~|~ \forall \textbf{c} \in \mathcal{C} \} \subseteq \mathbb{Z}[i]^{n}$ is a lattice code obtained from the linear code $\mathcal{C}$ through the mapping $\Phi:\mathcal{U}_{m} \rightarrow \mathbb{Z}[i]$ given by
\begin{equation*}
\Phi\left(\sum_{j = 0}^{m-1} b_{j} u^{j}\right) = \sum_{j = 0}^{m-1} \psi(b_{j}) \left(\Phi(u)\right)^{j},
\end{equation*}
such that $\psi:\mathbb{F}_{2} \rightarrow \mathbb{Z}[i]$ given by $\psi (0) = 0$ and $\psi (1) = 1$, and $\Phi(u) = 1+i$.
\end{definition}

\indent Note that Construction $A$ can be obtained as a special case from Construction $A^{\prime}$ when $m = 1$, wherein the embedding operation $\Phi$ coincides with $\psi$ given in Definition \ref{def_cons_d}. In the following subsections, we use Construction $A^{\prime}$ to obtain complex BW lattices of dimension $2^{m}$ for any $m \geq 1$ by embedding a linear code $\mathcal{C}$ (denoted by $\mathcal{C}_{2^{m}}$) over the quotient ring $\mathcal{U}_{m}$ to a lattice code $\mathcal{EC}$ (denoted by $\mathcal{EC}_{2^{m}}$).

\subsection{Linear codes for construction $A^{\prime}$}
\label{sec3_subsec1}

\indent In order to obtain $BW_{2^{m}}$ as Construction $A^{\prime}$, we first need to find a suitable linear code $\mathcal{C}_{2^{m}}$ over the ring $\mathcal{U}_{m}$. We propose such a linear code which can be obtained by the following the generator matrix
%
%
%
\begin{equation*}
\label{gen_matrix}
\textbf{G}_{2^{m}} = \left[\begin{array}{cc}
1 & 1\\
0 & u\\
\end{array}\right]^{\otimes m},
\end{equation*}
where the tensor operation is over the ring $\mathcal{U}_{m}$. 
\begin{example}
To obtain $BW_{4}$, the linear code $\mathcal{C}_{4}$ can be generated using
\begin{equation*}
\textbf{G}_{4} = \left[\begin{array}{cccc}
1 & 1 & 1 & 1\\
0 & u & 0 & u\\
0 & 0 & u & u\\
0 & 0 & 0 & 0\\
\end{array}\right] \in \mathcal{U}_{2}^{4 \times 4}.
\end{equation*}
\end{example}
$~~$\\

\noindent \textbf{Encoding of linear code $\mathcal{C}_{2^{m}}$}\\
\indent By using $\textbf{G}_{2^m}$ as a matrix over $\mathcal{U}_{m}$, the code $\mathcal{C}_{2^{m}}$ is obtained as follows: Let $\textbf{z} \in \mathcal{U}_{m}^{2^{m}}$, i.e., the $j$-th component of $\textbf{z}$ is given by 
\begin{equation}
\label{inpt_bit_poly}
z_{j} = \sum_{k = 0}^{m-1} b_{k,j} u^{k},
\end{equation} 
where $b_{k,j} \in \mathbb{F}_{2}$ for all $k, j$. Using $\textbf{z}$ and $\textbf{G}_{2^m}$, the code $\mathcal{C}_{2^{m}} \subseteq \mathcal{U}_{m}^{2^{m}}$ can be obtained as 
\begin{equation}
\label{construction_a}
\mathcal{C}_{2^{m}} = \left\{ \textbf{x} = \textbf{z} \textbf{G}_{2^m} ~|~ \forall \textbf{z} \in \mathcal{U}_{m}^{2^{m}} \right\},
\end{equation}
where the matrix multiplication is over $\mathcal{U}_{m}$. 

\indent We now provide an example for the proposed encoding technique, showing the positions of the information bits that get encoded to the codewords of $\mathcal{C}_{2^{m}}$.

\begin{example}
For $m = 2$, the input vector $\textbf{z}$ and the generator matrix $\textbf{G}_{4}$ are of the form,
\begin{equation*}
\textbf{z}^{T} = \left[\begin{array}{c}
b_{0,1} + b_{1,1}u \\
b_{0,2}\\
b_{0,3}\\
0\\
\end{array}\right] \mbox{ and } \textbf{G}_{4} = \left[\begin{array}{cccc}
1 & 1 & 1 & 1\\
0 & u & 0 & u\\
0 & 0 & u & u\\
0 & 0 & 0 & 0\\
\end{array}\right].
\end{equation*}
\end{example}
$~~$\\

\indent We define the rate of the linear code $\mathcal{C}_{2^{m}}$ as the ratio of the number of information bits per codeword and the length of the code (which is also known as the spectral-efficiency of the code).  

\begin{proposition}
\label{prop1}
The rate of the code $\mathcal{C}_{2^{m}}$ is $\frac{m}{2}$.
\end{proposition}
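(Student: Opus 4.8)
The plan is to count the number of free information bits carried by a codeword of $\mathcal{C}_{2^m}$ and divide by the block length. Since $\mathcal{C}_{2^m} \subseteq \mathcal{U}_m^{2^m}$, the length is $n = 2^m$, so it remains to show that a codeword carries exactly $m 2^{m-1}$ information bits. First I would expose the structure of $\mathbf{G}_{2^m}$ by indexing its rows and columns with binary $m$-tuples $\mathbf{i} = (i_1,\dots,i_m) \in \mathbb{F}_2^m$. Writing the base matrix as $B$ with $B_{0,0} = B_{0,1} = 1$, $B_{1,1} = u$ and $B_{1,0} = 0$, the tensor-power entry factorizes as $(\mathbf{G}_{2^m})_{\mathbf{i},\mathbf{j}} = \prod_{\ell=1}^{m} B_{i_\ell,j_\ell}$. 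This vanishes unless $i_\ell \le j_\ell$ for every $\ell$ (write $\mathbf{i} \preceq \mathbf{j}$ for this coordinatewise order), and when $\mathbf{i} \preceq \mathbf{j}$ it equals $u^{\mbox{wt}(\mathbf{i})}$, where $\mbox{wt}(\cdot)$ denotes Hamming weight. The key consequence is that every nonzero entry of row $\mathbf{i}$ equals $u^{\mbox{wt}(\mathbf{i})}$; in particular the diagonal entry is $u^{\mbox{wt}(\mathbf{i})}$, and the weights realized on the diagonal range over $0,1,\dots,m$ with multiplicity $C^m_w$ for weight $w$.

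Next I would count the effective bits contributed by each row. Writing $z_{\mathbf{i}} = \sum_{k=0}^{m-1} b_{k,\mathbf{i}} u^k$, the contribution of $z_{\mathbf{i}}$ to any codeword coordinate $\mathbf{j} \succeq \mathbf{i}$ is $z_{\mathbf{i}} u^{\mbox{wt}(\mathbf{i})} = \sum_{k} b_{k,\mathbf{i}} u^{k+\mbox{wt}(\mathbf{i})} \bmod u^m$. Every bit $b_{k,\mathbf{i}}$ with $k + \mbox{wt}(\mathbf{i}) \ge m$ is multiplied by $u^m = 0$ and hence contributes nothing to \emph{any} coordinate of the codeword; therefore row $\mathbf{i}$ carries only the $m - \mbox{wt}(\mathbf{i})$ bits $b_{0,\mathbf{i}},\dots,b_{m-1-\mbox{wt}(\mathbf{i}),\mathbf{i}}$ (matching the reduced input forms illustrated in the examples above, where the weight-$0$ row carries $m$ bits and the weight-$m$ row carries $0$). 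Grouping rows by weight $w$ and summing, the total number of information bits is $\sum_{w=0}^{m} C^m_w (m-w) = m\sum_{w} C^m_w - \sum_{w} w\,C^m_w = m\,2^m - m\,2^{m-1} = m\,2^{m-1}$, whence the rate equals $\frac{m\,2^{m-1}}{2^m} = \frac{m}{2}$.

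The main obstacle is making this count rigorous rather than merely reading it off the examples: one must confirm that the $m-\mbox{wt}(\mathbf{i})$ retained bits per row are genuinely free, i.e.\ that the reduced encoding map is injective and carries no hidden redundancy. I would establish this with a triangularity argument exploiting the order $\preceq$. Suppose $\mathbf{w}\,\mathbf{G}_{2^m} = 0$ for a nonzero reduced input $\mathbf{w}$ (reduced inputs form an $\mathbb{F}_2$-subspace, so differences of codewords arise this way), and pick an index $\mathbf{j}$ minimal in $\preceq$ among those with $w_{\mathbf{j}} \ne 0$. By minimality, coordinate $\mathbf{j}$ of $\mathbf{w}\,\mathbf{G}_{2^m} = \sum_{\mathbf{i} \preceq \mathbf{j}} w_{\mathbf{i}} u^{\mbox{wt}(\mathbf{i})}$ collapses to the single term $w_{\mathbf{j}} u^{\mbox{wt}(\mathbf{j})}$. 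Since $w_{\mathbf{j}}$ is supported on degrees $\le m-1-\mbox{wt}(\mathbf{j})$, multiplication by $u^{\mbox{wt}(\mathbf{j})}$ keeps every nonzero term below degree $m$, so $w_{\mathbf{j}} u^{\mbox{wt}(\mathbf{j})} \ne 0$ in $\mathcal{U}_m$, contradicting $\mathbf{w}\,\mathbf{G}_{2^m} = 0$. This confirms injectivity, so $m\,2^{m-1}$ is the exact information-bit count and the rate is $\frac{m}{2}$ as claimed.
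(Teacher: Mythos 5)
Your proof is correct, and its counting core is the same as the paper's: the paper groups the rows of $\textbf{G}_{2^m}$ into sets $\mathcal{I}_{q}$ of rows whose entries are $0$ or $u^{q}$, observes $|\mathcal{I}_{q}| = C^{m}_{q}$, assigns $m-q$ surviving bits to each such row (since $z_{j}u^{q}$ kills all coefficients $b_{k,j}$ with $k \geq m-q$), and evaluates $\sum_{k=0}^{m-1}(m-k)C^{m}_{k} = \frac{m}{2}2^{m}$ --- exactly your weight-$w$ grouping and binomial identity. Where you go beyond the paper is in rigor, in two places. First, you actually derive the row structure from the tensor-index factorization $(\textbf{G}_{2^m})_{\mathbf{i},\mathbf{j}} = \prod_{\ell} B_{i_\ell, j_\ell}$, nonzero precisely when $\mathbf{i} \preceq \mathbf{j}$ and then equal to $u^{\mathrm{wt}(\mathbf{i})}$, whereas the paper asserts this ``using the structure of $\textbf{G}_{2^m}$'' without proof. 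Second, and more substantively, your triangularity argument (choose $\mathbf{j}$ minimal under $\preceq$ in the support of a reduced input $\mathbf{w}$, so that coordinate $\mathbf{j}$ of $\mathbf{w}\textbf{G}_{2^m}$ collapses to the single nonzero term $w_{\mathbf{j}}u^{\mathrm{wt}(\mathbf{j})}$) proves that the $m2^{m-1}$ retained bits are genuinely free, i.e.\ that distinct reduced inputs yield distinct codewords. The paper's proof of Proposition \ref{prop1} omits this injectivity step entirely and tacitly assumes it; within the paper that assumption is only backed indirectly by Theorem 1 (uniqueness of the Construction $D$ representation of codewords), whose proof is deferred to an external reference. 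Your version is therefore self-contained where the paper's is not, at the cost of the binary-indexing bookkeeping that the paper sidesteps.
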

\begin{proof}
Each component of $\textbf{z}$ carries $m$ information bits in the variables $b_{k, j}$ as shown in \eqref{inpt_bit_poly}. This amounts to a total of $m2^{m}$ bits carried by $\textbf{z}$. However, since the matrix multiplication is over $\mathcal{U}_{m}$, not all the information bits $b_{k, j}$ are encoded as codewords of $\mathcal{C}_{2^{m}}$ (since $u^{k} = 0$ for $k \geq m$). Using the structure of $\textbf{G}_{2^m}$ it is possible to identify the indices $(k, j)$ of information bits $b_{k, j}$ which get encoded into the codewords of $\mathcal{C}_{2^{m}}$ as follows. Let the set $\mathcal{I}_{q}$ denote the indices of the rows of $\textbf{G}_{2^m}$ whose components take values $0$ or $u^{q}$ for each $q = 0, 1, \ldots, m-1$. Due to the quotient operation $u^{m} = 0$, the components of $\textbf{z}$ which are in the index set $\mathcal{I}_{q}$ are restricted to be of the form, $$z_{j} = \sum_{k = 0}^{m-1-q} b_{k,j}u^{k} ~\forall j \in \mathcal{I}_{q}.$$ For example, $z_{1} = \sum_{k = 0}^{m-1} b_{k,1}u^{k}$ and $z_{2^{m}} = 0$. Using the structure of $\textbf{G}_{2^m}$ we observe that the cardinality of $\mathcal{I}_{q}$ denoted by $|\mathcal{I}_{q}|$ is $C^{m}_{q}$, and hence we find the total number of information bits per codeword of $\mathcal{C}_{2^{m}}$ as $\sum_{k = 0}^{m-1} (m - k) C^{m}_{k} = \frac{m}{2}2^{m}.$
\end{proof}

We now show the equivalence of our encoding technique to Construction $D$. In other words, the following theorem shows that the codewords generated in \eqref{construction_a} can be uniquely represented as vectors of a multi-level code of nested RM codes.
\begin{theorem}
The codewords generated in \eqref{construction_a} can be uniquely represented as codewords obtained through Construction $D$.
\end{theorem}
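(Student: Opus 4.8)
The plan is to exploit the fact that a codeword of $\mathcal{C}_{2^{m}}$ is a vector over $\mathcal{U}_{m}=\mathbb{F}_{2}[u]/u^{m}$ and therefore carries a canonical grading by powers of $u$. Writing a codeword $\textbf{x}$ produced by \eqref{construction_a} as $\textbf{x}=\sum_{r=0}^{m-1}u^{r}\textbf{c}_{r}$ with $\textbf{c}_{r}\in\mathbb{F}_{2}^{2^{m}}$, this $u$-adic expansion is unique by the definition of $\mathcal{U}_{m}$, which immediately settles the uniqueness assertion. It then suffices to show that (i) every coefficient vector satisfies $\textbf{c}_{r}\in\mathcal{RM}(r,m)$, and (ii) the map $\textbf{x}\mapsto(\textbf{c}_{0},\ldots,\textbf{c}_{m-1})$ is a bijection onto $\mathcal{RM}(0,m)\times\cdots\times\mathcal{RM}(m-1,m)$. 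Applying the embedding $\Phi$ with $\Phi(u)=1+i$ then turns the expansion into $\Phi(\textbf{x})=\sum_{r=0}^{m-1}(1+i)^{r}\psi(\textbf{c}_{r})$, which is exactly the Construction $D$ representation in \eqref{construction_d}--\eqref{construction_d_easy}.

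The core of the argument is a structural analysis of $\textbf{G}_{2^{m}}$. Each of its rows is indexed by a string $\textbf{s}\in\{0,1\}^{m}$ recording, at each of the $m$ Kronecker stages, whether the top row $(1\ 1)$ or the bottom row $(0\ u)$ of the kernel $\left[\begin{smallmatrix}1&1\\0&u\end{smallmatrix}\right]$ is selected. A short induction on $m$ using $\textbf{G}_{2^{m}}=\left[\begin{smallmatrix}1&1\\0&u\end{smallmatrix}\right]\otimes\textbf{G}_{2^{m-1}}$ shows that this row equals $u^{w(\textbf{s})}\textbf{h}_{\textbf{s}}$, where $w(\textbf{s})$ is the Hamming weight of $\textbf{s}$ and $\textbf{h}_{\textbf{s}}$ is the binary row of $\left[\begin{smallmatrix}1&1\\0&1\end{smallmatrix}\right]^{\otimes m}$ indexed by the same $\textbf{s}$: a row accrues one factor of $u$ for each stage in which the $(0\ u)$ branch is taken. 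The classical fact I then invoke is that the binary rows $\textbf{h}_{\textbf{s}}$ with $w(\textbf{s})\le r$ form a basis of $\mathcal{RM}(r,m)$, so that the exponent of $u$ carried by a generator row coincides with the Reed--Muller order at which that row first enters the nested chain; this is the squaring/Plotkin description of RM codes underlying the BW--RM connection recalled around \eqref{construction_d}.

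With this in hand I extract coefficients directly. Substituting $z_{j}=\sum_{k}b_{k,j}u^{k}$ into $\textbf{x}=\textbf{z}\textbf{G}_{2^{m}}$ and using that the $j$-th row equals $u^{q}\textbf{h}_{j}$ for $j\in\mathcal{I}_{q}$, collecting the coefficient of $u^{r}$ (all terms of degree $\ge m$ vanishing through $u^{m}=0$) yields $\textbf{c}_{r}=\sum_{q=0}^{r}\sum_{j\in\mathcal{I}_{q}}b_{r-q,\,j}\,\textbf{h}_{j}$, a binary combination of the rows of index-weight at most $r$; by the key fact this lies in $\mathcal{RM}(r,m)$, proving (i). For (ii), the degree restriction $z_{j}=\sum_{k=0}^{m-1-q}b_{k,j}u^{k}$ for $j\in\mathcal{I}_{q}$ noted in the proof of Proposition \ref{prop1} shows that each free bit $b_{k,j}$ enters exactly one level, namely $\textbf{c}_{k+q}$; the information bits therefore partition cleanly across the levels, and since $\{\textbf{h}_{j}:j\in\mathcal{I}_{0}\cup\cdots\cup\mathcal{I}_{r}\}$ is a basis of $\mathcal{RM}(r,m)$, each $\textbf{c}_{r}$ ranges freely and independently over $\mathcal{RM}(r,m)$. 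Equivalently, injectivity is automatic and the count $|\mathcal{C}_{2^{m}}|=2^{(m/2)2^{m}}=\prod_{r=0}^{m-1}|\mathcal{RM}(r,m)|$ from Proposition \ref{prop1} forces surjectivity.

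The main obstacle is the structural claim of the second step: fixing conventions so that the factor $u^{w(\textbf{s})}$ pulled out of a tensor-product row is matched against the correct Reed--Muller order $w(\textbf{s})$, and verifying that the weight-$\le r$ binary rows span precisely $\mathcal{RM}(r,m)$ and not merely some code of the right dimension. Once the $u$-grading of $\textbf{G}_{2^{m}}$ and the order-grading of the nested RM chain are identified, parts (i) and (ii) are routine, and the completeness is pure bookkeeping delivered by the rate computation of Proposition \ref{prop1}.
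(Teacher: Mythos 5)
Your proof is correct and, unlike the paper---whose ``proof'' of this theorem is only a pointer to Theorem 1 of \cite{HVB}---it is self-contained. The route you take is also the natural one, and it coincides with the machinery the paper itself sets up: your weight classes $\{\textbf{s} : w(\textbf{s})=q\}$ are precisely the index sets $\mathcal{I}_{q}$ with $|\mathcal{I}_{q}| = C^{m}_{q}$ used in the proof of Proposition \ref{prop1}. Both pillars of your argument hold up under scrutiny. The factorization of the row of $\textbf{G}_{2^{m}} = \left[\begin{smallmatrix}1&1\\0&u\end{smallmatrix}\right]^{\otimes m}$ indexed by $\textbf{s}$ as $u^{w(\textbf{s})}\textbf{h}_{\textbf{s}}$ follows from $(0 \;\; u) = u\,(0 \;\; 1)$ and multilinearity of the Kronecker product; and $\textbf{h}_{\textbf{s}}$ is the evaluation vector of the monomial $\prod_{i: s_{i}=1} x_{i}$, so the rows with $w(\textbf{s})\le r$ are exactly the monomial basis of $\mathcal{RM}(r,m)$---the ``key fact'' you flagged as the main obstacle is the standard monomial description of Reed--Muller codes, not merely a code of the right dimension, so there is no gap there. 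Because $\mathcal{U}_{m}$ is an $\mathbb{F}_{2}$-algebra, the $u$-adic coefficients add componentwise without carries, which makes your extraction $\textbf{c}_{r} = \sum_{q=0}^{r}\sum_{j\in\mathcal{I}_{q}} b_{r-q,j}\,\textbf{h}_{j}$ valid; each bit $b_{k,j}$ with $j \in \mathcal{I}_{q}$ feeds the single level $r = k+q$, and linear independence of the $\textbf{h}_{j}$ (rows of a unipotent triangular matrix) then gives a bijection onto $\mathcal{RM}(0,m)\times\cdots\times\mathcal{RM}(m-1,m)$. Uniqueness is the uniqueness of the $u$-adic expansion, and it transfers to the $\mathbb{Z}[i]$ form \eqref{construction_d} because $\Phi$ acts digitwise and radix-$(1+i)$ representations with digits in $\{0,1\}$ are unique \cite{NiK}, exactly as the paper invokes for the shaping map.

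Two minor points. First, your fallback finish via cardinalities is circular as written: Proposition \ref{prop1} counts information bits per codeword, and asserting $|\mathcal{C}_{2^{m}}| = 2^{\frac{m}{2}2^{m}}$ presupposes the very injectivity of the encoding you are certifying; drop it, since your direct basis argument already delivers bijectivity. Second, spend one explicit sentence on the index of weight $m$: its row is $u^{m}\textbf{h}_{\textbf{s}} = \textbf{0}$ (the zero last row of $\textbf{G}_{4}$ in the paper's example, with $z_{2^{m}}=0$), so it contributes nothing and the expansion correctly terminates at level $m-1$, as Construction $D$ requires.
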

\begin{proof}
See the proof of Theorem 1 in \cite{HVB}.
\end{proof}

Till now, we have presented the linear code $\mathcal{C}_{2^{m}}$ and its encoding technique over the quotient ring $\mathcal{U}_{m}$. Now, we discuss the embedding operation of $\mathcal{C}_{2^{m}}$ into the Euclidean space. By using the map $\Phi(u) = 1+i$ on $\mathcal{C}_{2^{m}}$, we get the lattice code $\mathcal{EC}_{2^{m}}$.
Note that $\mathcal{EC}_{2^{m}}$ can be used as a \emph{tile} in constructing the BW lattice, i.e., $BW_{2^{m}}$ can be obtained by replicating $\mathcal{EC}_{2^{m}}$ in $\mathbb{Z}[i]^{2^{m}}$ as $BW_{2^{m}} = (1+i)^{m}\mathbb{Z}[i]^{2^{m}} + \mathcal{EC}_{2^{m}}.$ It can be verified that $\mathcal{EC}_{2^{m}}$ is an arbitrary subset of $BW_{2^{m}}$ and does not have cubic shaping. In Fig. \ref{tiling}, we plot the complex points generated as $\left\{ \sum_{r=0}^{m-1} (1+i)^{r}b_{r} ~|~ b_{r} \in \{0, 1\} \right\}$ for $m = 10$. Note that the points generated by $\sum_{r=0}^{m-1} (1+i)^{r}b_{r}$ are marked in black, whereas the points in other shades correspond to the shifted version of $\sum_{r=0}^{m-1} (1+i)^{r}b_{r}$ by constants $(1+i)^{m}, i(1+i)^{m}$ and $(1+i)(1+i)^{m}$.

\begin{figure}
\centering
\includegraphics[width=3.5in]{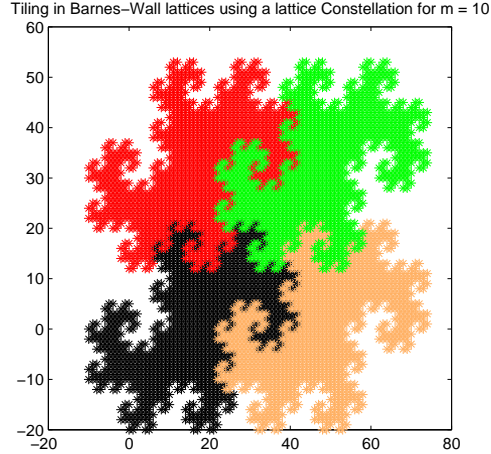}
\caption{Filling the complex plane using the tile generated by $\sum_{r=0}^{m-1} (1+i)^{r}b_{r}$ for $m = 10$.}
\label{tiling}
\end{figure}

Note that the code $\mathcal{EC}_{2^{m}}$ does not have good shaping, we observe that the average transmit power of the scheme is not small. To fix this problem, we propose a one-to-one mapping $\phi$ on $\mathcal{EC}_{2^{m}}$ to obtain a new lattice code denoted by $\mathcal{L}_{2^{m}}$ such that it has good shaping property.

\subsection{BW lattice codes with cubic shaping}
\label{sec3_subsec3}

Here, we propose a one-to-one mapping $\phi$ on $\mathcal{EC}_{2^{m}}$ to obtain a new lattice code $\mathcal{L}_{2^{m}}$ which has the cubic shaping property when $m$ is even, and the rectangular shaping property when $m$ is odd. For any $\textbf{x} = \left[ x_{1}, x_{2}, x_{3}, \ldots, x_{2^{m}} \right] \in \mathcal{EC}_{2^{m}}$, the mapping $\phi$ operates on each component of $\textbf{x}$ as,
\begin{eqnarray}
\label{cubic_shaping_map}
\phi(x_{j}) = \left\{ \begin{array}{ccccc}
x_{j} \mbox{ mod } 2^{\frac{m}{2}}, \mbox{ when } m \mbox{ is even};\\
\varphi\left(x_{j} \mbox{ mod } 2^{\frac{m+1}{2}} \right), \mbox{ when } m \mbox{ is odd},\\
\end{array} 
\right.
\end{eqnarray}
where $\varphi(\cdot)$ is defined on $\mathbb{Z}_{2^{\frac{m+1}{2}}}[i]$ as
{\small
\begin{eqnarray}
\label{new_map_odd_m}
\varphi(z) = \left\{ \begin{array}{ccccc}
z, \mbox{ when } \Im(z) < 2^{\frac{m-1}{2}};\\
z + \left( 2^{\frac{m-1}{2}} - i2^{\frac{m-1}{2}}\right), \mbox{ when } \Re(z) < 2^{\frac{m-1}{2}} \\ ~~~~~~~~~~~\mbox{ and } \Im(z) \geq 2^{\frac{m-1}{2}};\\
z - \left( 2^{\frac{m-1}{2}} + i2^{\frac{m-1}{2}}\right), \mbox{ when } \Re(z) \geq 2^{\frac{m-1}{2}}\\ ~~~~~~~~~~~\mbox{ and } \Im(z) \geq 2^{\frac{m-1}{2}}.\\
\end{array} 
\right.
\end{eqnarray}
}
The mapping $\phi$ guarantees the following property on $\mathcal{L}_{2^{m}}$:
\begin{eqnarray}
\label{mod_box}
\mathcal{L}_{2^{m}} \subseteq \left\{ \begin{array}{ccccc}
\left\{\mathbb{Z}_{2^{\frac{m}{2}}}[i]\right\}^{2^{m}}, \mbox{ if } m \mbox{ is even};\\~~\\
\left\{\mathbb{Z}_{2^{\frac{m+1}{2}}}\right\}^{2^{m}} + i\left\{\mathbb{Z}_{2^{\frac{m-1}{2}}}\right\}^{2^{m}}, \mbox{ if } m \mbox{ is odd}.\\
\end{array} 
\right.
\end{eqnarray}
From \eqref{mod_box}, note that each component of the vector in $\mathcal{L}_{2^{m}}$ is in a cubic box and a rectangular box, when $m$ is even and odd, respectively. In Fig. \ref{cubic_shaping}, we present the complex points $\sum_{r=0}^{m-1} (1+i)^{r}b_{r}$ with and without the mapping $\phi$ for $m = 10$. With this, the lattice code $\mathcal{L}_{2^{m}}$ can be obtained from $\mathcal{C}_{2^{m}}$ through the composition map 
\begin{equation}
\label{map_chi}
\chi = \phi(\Phi(\cdot)),
\end{equation}
where $\Phi$ and $\phi$ are given in Definition \ref{def1} and \eqref{cubic_shaping_map} respectively. The following proposition shows that $\chi(\cdot)$ is a one-to-one map on $\mathcal{C}_{2^{m}}$

\begin{figure}
\centering
\includegraphics[width=3.5in]{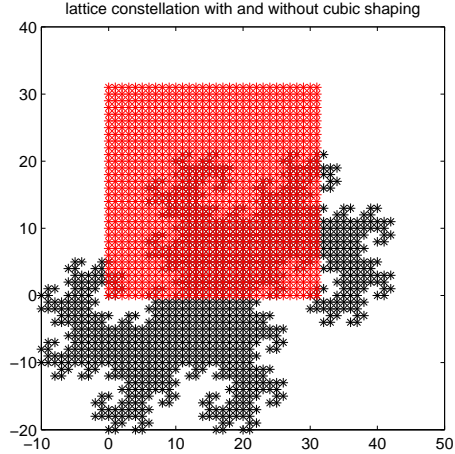}
\caption{Complex points generated by $\sum_{r=0}^{m-1} (1+i)^{r}b_{r}$ and $\phi(\sum_{r=0}^{m-1} (1+i)^{r}b_{r})$ for $m = 10$.}
\label{cubic_shaping}
\end{figure}

\begin{proposition}
The mapping $\chi$ given in \eqref{map_chi} is one-to-one.
\end{proposition}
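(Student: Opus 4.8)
The plan is to prove that $\chi = \phi \circ \Phi$ is injective on $\mathcal{C}_{2^m}$ by establishing injectivity of each factor on the relevant domain. Since $\chi = \phi(\Phi(\cdot))$, it suffices to show (i) that $\Phi$ is one-to-one on $\mathcal{C}_{2^m}$, so that distinct codewords map to distinct points of $\mathcal{EC}_{2^m}$, and (ii) that $\phi$ is one-to-one on $\mathcal{EC}_{2^m}$. The composition of injective maps is injective, so these two facts close the argument.

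For step (i), I would argue at the level of individual components. By Definition \ref{def1}, $\Phi$ acts componentwise, sending $z_j = \sum_{k=0}^{m-1} b_{k,j} u^k \in \mathcal{U}_m$ to $\sum_{k=0}^{m-1} \psi(b_{k,j})(1+i)^k \in \mathbb{Z}[i]$. The key observation is that this is precisely the truncated radix-$(1+i)$ expansion of a Gaussian integer, and such expansions are unique: the digits $b_{k,j} \in \{0,1\}$ are recovered one at a time from $\Phi(z_j)$ by repeated division by $1+i$ (the least significant digit is determined by whether $\Phi(z_j)$ is divisible by $1+i$, i.e.\ by the parity of $\Re + \Im$, and one then peels off that digit and divides). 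Hence $\Phi$ restricted to a single coordinate is injective from $\mathcal{U}_m$ into $\mathbb{Z}[i]$, and therefore $\Phi$ is injective on all of $\mathcal{U}_m^{2^m} \supseteq \mathcal{C}_{2^m}$. This uses the base-$(1+i)$ representation already invoked in the Motivation for Construction $A'$.

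For step (ii), I would work componentwise again and exploit the shaping property \eqref{mod_box} together with the explicit structure of $\mathcal{EC}_{2^m}$. When $m$ is even, $\phi(x_j) = x_j \bmod 2^{m/2}$; I would show that two distinct components $x_j, x_j' \in \mathbb{Z}[i]$ arising from codewords of $\mathcal{EC}_{2^m}$ cannot be congruent modulo $2^{m/2}\mathbb{Z}[i]$ unless they are equal, because the possible values of each coordinate of $\mathcal{EC}_{2^m}$ form a set of coset representatives for $\mathbb{Z}[i]/2^{m/2}\mathbb{Z}[i]$ (the tile generated by $\sum_{r=0}^{m-1}(1+i)^r b_r$ has exactly $|\mathbb{Z}[i]/2^{m/2}\mathbb{Z}[i]| = 2^m$ points, matching the $2^m$ binary strings, with $2^{m/2}\mathbb{Z}[i] = (1+i)^m\mathbb{Z}[i]$ since $m$ is even). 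When $m$ is odd, $\phi$ is $\varphi$ applied after reduction mod $2^{(m+1)/2}$, and I would verify directly from the piecewise definition \eqref{new_map_odd_m} that $\varphi$ is a bijection of $\mathbb{Z}_{2^{(m+1)/2}}[i]$ onto the rectangular region in \eqref{mod_box}: each of the three cases shifts a disjoint horizontal strip of the square into a distinct region, and the shifts are reversible, so $\varphi$ is one-to-one.

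The main obstacle I anticipate is step (ii) in the odd-$m$ case: one must check that the three affine pieces of $\varphi$ do not collide, i.e.\ that their images partition the target rectangle without overlap, and that the reduction $x_j \bmod 2^{(m+1)/2}$ loses no information on $\mathcal{EC}_{2^m}$. This requires carefully tracking the real and imaginary ranges — showing that the strip $\Im(z) \geq 2^{(m-1)/2}$ splits cleanly along $\Re(z) = 2^{(m-1)/2}$ and that the two shifts land in non-overlapping columns of the enlarged rectangle. The even case and step (i) are comparatively routine once the base-$(1+i)$ uniqueness is in hand; the bookkeeping for $\varphi$ is where the real verification lies.
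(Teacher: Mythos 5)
Your overall decomposition — reduce to injectivity of $\Phi$ via uniqueness of base-$(1+i)$ digits, then prove $\phi$ injective on $\mathcal{EC}_{2^m}$ — is exactly the paper's, and your even-$m$ argument is essentially sound. The genuine gap is in the odd-$m$ case: $\varphi$ is \emph{not} a bijection of $\mathbb{Z}_{2^{(m+1)/2}}[i]$ onto the rectangle in \eqref{mod_box}, and cannot be, by counting: the square has $2^{(m+1)/2}\cdot 2^{(m+1)/2} = 2^{m+1}$ elements while the rectangle has $2^{(m+1)/2}\cdot 2^{(m-1)/2} = 2^{m}$. In fact $\varphi$ is exactly two-to-one: the first case of \eqref{new_map_odd_m} already maps the bottom strip $\Im(z) < 2^{(m-1)/2}$ identically onto the whole rectangle, and the second and third cases fold the top strip back onto that same image. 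Concretely, for $m=1$ the square is $\{0,1,i,1+i\}$ and $\varphi$ sends $i\mapsto 1$ and $1+i\mapsto 0$, colliding with the identity on $\{0,1\}$. So the collision-freeness of the three affine pieces — precisely the point you flagged as the remaining bookkeeping — fails, and injectivity of $\phi$ for odd $m$ cannot be established componentwise on all of $\mathbb{Z}_{2^{(m+1)/2}}[i]$; it must exploit the restriction to tile coordinates.

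The repair, which is also the correct reading of the paper's ``similar lines'' remark (the paper only writes out the even case), is to verify the congruence $\phi(x_j) \equiv x_j \pmod{(1+i)^m\mathbb{Z}[i]}$ in the odd case as well: the reduction mod $2^{(m+1)/2}$ subtracts unit multiples of $(1+i)^{m+1}$, and the two shifts in \eqref{new_map_odd_m} equal $2^{(m-1)/2}(1\mp i)$, which are unit multiples of $(1+i)^m$ since $(1+i)^{m-1}$ is a unit times $2^{(m-1)/2}$ for odd $m$ and $1-i = -i(1+i)$. This is the same congruence the paper derives for even $m$ from $(1+i)^m = a\,2^{m/2}$, $a \in \{1,-1,i,-i\}$. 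Since each coordinate of a point of $\mathcal{EC}_{2^m}$ is a truncated expansion $\sum_{r=0}^{m-1}(1+i)^r b_r$, and your digit-peeling argument recovers $b_0, b_1, \ldots, b_{m-1}$ successively from the residues modulo $(1+i), (1+i)^2, \ldots, (1+i)^m$, distinct tile coordinates are pairwise incongruent mod $(1+i)^m$, so no two can be identified by $\phi$; this settles both parities at once. The same observation patches a smaller soft spot in your even case: your coset-representative claim requires pairwise incongruence of the $2^m$ tile values modulo $2^{m/2}\mathbb{Z}[i] = (1+i)^m\mathbb{Z}[i]$, which the cardinality match alone does not deliver — distinct Gaussian integers can share a coset — but your peeling argument, run modulo $(1+i)^m$ rather than over $\mathbb{Z}[i]$, supplies exactly this.
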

\begin{proof}
Since $\chi$ is a composition mapping of $\Phi$ and $\phi$, and $\Phi(\cdot)$ is a substitution operation using binary representation of complex numbers over the base $(1+i)$, we have to prove that $\phi$ given in \eqref{cubic_shaping_map} is one-to-one. Here, we provide the proof when $m$ is even. For any $\textbf{x}_{1}, \textbf{x}_{2} \in \mathcal{EC}_{2^{m}}$ such that $\textbf{x}_{1} \neq \textbf{x}_{2}$, we prove that $\phi(\textbf{x}_{1}) \neq \phi(\textbf{x}_{2})$. Applying the modulo operation in \eqref{cubic_shaping_map}, $\textbf{x}_{j}$ satisfies $\textbf{x}_{j} = 2^{\frac{m}{2}} \textbf{r}_{j} + \phi(\textbf{x}_{j})$ for each $j = 1, 2$, where $\phi(\textbf{x}_{j}) \in \mathcal{L}_{2^{m}}$ and $\textbf{r}_{j} \in \mathbb{Z}[i]^{2^m}$. This implies 
\begin{eqnarray}
\label{phi_of_vector}
\phi(\textbf{x}_{j}) = \textbf{x}_{j} - 2^{\frac{m}{2}} \textbf{r}_{j} = \textbf{x}_{j} + (1+i)^m \textbf{r}_{j}',
\end{eqnarray}
for some $\textbf{r}_{j}' \in \mathbb{Z}[i]^{2^m}$. The second equality follows as $(1+i)^m = a2^{\frac{m}{2}}, \mbox{ where } a \in \{ 1, -1, i, -i \}$. Further, since each component of $\textbf{x}_{j}$ is of the form $\sum_{r=0}^{m-1} (1+i)^{r}b_{r}$ for $b_{r} \in \{0, 1\}$, the R.H.S of \eqref{phi_of_vector} is nothing but the binary decomposition of $\phi(\textbf{x}_{j})$ over the base $(1+i)$. Since the radix representation over $(1+i)$ is unique, we have $\phi(\textbf{x}_{1}) = \phi(\textbf{x}_{2})$ only if $\textbf{x}_{1} = \textbf{x}_{2}$. This completes the proof when $m$ is even. The one-to-one nature of $\phi$ can be proved on the similar lines when $m$ is odd.
\end{proof}

The above proposition implies that mapping $\phi$ provides a new lattice code with better shaping property. The following theorem shows that $\mathcal{L}_{2^{m}}$ can be used as a tile to obtain BW lattices.
\begin{theorem}
\label{thm2}
The lattice code $\mathcal{L}_{2^{m}}$ and the lattice $BW_{2^{m}}$ are related as $BW_{2^{m}} = (1+i)^m\mathbb{Z}[i]^{2^m} \oplus \mathcal{L}_{2^{m}}.$
\end{theorem}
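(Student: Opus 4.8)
The plan is to build on the tiling identity $BW_{2^{m}} = (1+i)^{m}\mathbb{Z}[i]^{2^{m}} + \mathcal{EC}_{2^{m}}$ already recorded above. This identity, together with Theorem~1, says more precisely that the map $\Phi$ carries the codewords of $\mathcal{C}_{2^{m}}$ bijectively onto a complete set of \emph{distinct} coset representatives of the sublattice $(1+i)^{m}\mathbb{Z}[i]^{2^{m}}$ in $BW_{2^{m}}$; in other words $\mathcal{EC}_{2^{m}}$ is a transversal. Since $\mathcal{L}_{2^{m}}$ arises from the very same codewords through $\chi = \phi\circ\Phi$ of \eqref{map_chi}, it suffices to prove that replacing each representative in $\mathcal{EC}_{2^{m}}$ by its image under $\phi$ produces another complete, non-redundant transversal, namely $\mathcal{L}_{2^{m}}$.

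The key step is to verify that the shaping map $\phi$ preserves cosets modulo $(1+i)^{m}\mathbb{Z}[i]^{2^{m}}$, i.e. $\phi(\textbf{x})\equiv\textbf{x}$ for every $\textbf{x}\in\mathcal{EC}_{2^{m}}$. For $m$ even this is exactly \eqref{phi_of_vector}, where $\phi(\textbf{x}) = \textbf{x} + (1+i)^{m}\textbf{r}'$ with $\textbf{r}'\in\mathbb{Z}[i]^{2^{m}}$. For $m$ odd I would treat the two operations of \eqref{cubic_shaping_map}--\eqref{new_map_odd_m} separately: the reduction $x_{j}\bmod 2^{(m+1)/2}$ only subtracts multiples of $2^{(m+1)/2}$, while each of the three branches of $\varphi$ adds one of $0$, $2^{(m-1)/2}(1-i)$, or $-2^{(m-1)/2}(1+i)$. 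Writing $(1+i)^{m} = (1+i)\,2^{(m-1)/2}\,i^{(m-1)/2}$ for odd $m$, a short computation shows that $2^{(m+1)/2}$, $2^{(m-1)/2}(1-i)$ and $2^{(m-1)/2}(1+i)$ are all $\mathbb{Z}[i]$-multiples of $(1+i)^{m}$ (for instance $2^{(m-1)/2}(1+i) = (1+i)^{m}\,i^{-(m-1)/2}$). Hence $\phi(\textbf{x})\equiv\textbf{x}\pmod{(1+i)^{m}\mathbb{Z}[i]^{2^{m}}}$ in all cases, so $\mathcal{EC}_{2^{m}}$ and $\mathcal{L}_{2^{m}}$ occupy the same cosets and $BW_{2^{m}} = (1+i)^{m}\mathbb{Z}[i]^{2^{m}} + \mathcal{L}_{2^{m}}$.

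It then remains to upgrade this sum to a direct sum, i.e. to show that distinct elements of $\mathcal{L}_{2^{m}}$ lie in distinct cosets, so that the representation is unique. The cleanest route is by counting: the preceding proposition states that $\chi$ is one-to-one, so $|\mathcal{L}_{2^{m}}| = |\mathcal{C}_{2^{m}}|$, which by Proposition~\ref{prop1} equals $2^{(m/2)2^{m}}$, precisely the index $[BW_{2^{m}}:(1+i)^{m}\mathbb{Z}[i]^{2^{m}}]$ (the number of cosets already represented by $\mathcal{EC}_{2^{m}}$). A surjection onto the coset space from a set of equal cardinality is a bijection, which forces the representation to be unique and yields $BW_{2^{m}} = (1+i)^{m}\mathbb{Z}[i]^{2^{m}}\oplus\mathcal{L}_{2^{m}}$. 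Alternatively, uniqueness can be read off directly from the shaping property \eqref{mod_box}, which confines each component of a vector of $\mathcal{L}_{2^{m}}$ to a box of exactly $2^{m}$ points that is a fundamental domain for $(1+i)^{m}\mathbb{Z}[i]$ in $\mathbb{Z}[i]$.

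I expect the main obstacle to be the odd-$m$ verification in the second paragraph: confirming that every correction term introduced by $\varphi$ in \eqref{new_map_odd_m} lies in $(1+i)^{m}\mathbb{Z}[i]$, and, for the alternative uniqueness argument, that the rectangular box in \eqref{mod_box} is genuinely a set of coset representatives rather than merely of the right cardinality. This is exactly the property that the folding defined by $\varphi$ is engineered to guarantee, whereas the even-$m$ case is essentially immediate from \eqref{phi_of_vector} and the cubic box being the standard fundamental region of $2^{m/2}\mathbb{Z}[i] = (1+i)^{m}\mathbb{Z}[i]$.
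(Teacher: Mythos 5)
Your argument is correct and, as far as one can tell, matches the intended proof: the paper itself defers to Theorem~2 of \cite{HVB}, whose core step is exactly the coset congruence $\phi(\textbf{x}) \equiv \textbf{x} \pmod{(1+i)^{m}\mathbb{Z}[i]^{2^{m}}}$ that you verify for both parities of $m$ (the even case already appears verbatim as \eqref{phi_of_vector} in the paper's proof that $\chi$ is one-to-one), combined with the fact that the boxes in \eqref{mod_box} meet each coset of $(1+i)^{m}\mathbb{Z}[i]$ exactly once. Your odd-$m$ computations check out, since $(1+i)^{m} = (1+i)\,2^{(m-1)/2}\,i^{(m-1)/2}$ makes $2^{(m+1)/2}$ and $2^{(m-1)/2}(1 \pm i)$ unit multiples (up to a Gaussian-integer factor) of $(1+i)^{m}$, and both of your uniqueness routes are sound: the count $|\mathcal{L}_{2^{m}}| = |\mathcal{C}_{2^{m}}| = 2^{\frac{m}{2}2^{m}} = \left[BW_{2^{m}} : (1+i)^{m}\mathbb{Z}[i]^{2^{m}}\right]$ (the index following from the direct-sum form of Construction $D$ in \eqref{construction_d_easy} together with Theorem~1), and the direct check that the rectangle $\mathbb{Z}_{2^{(m+1)/2}} + i\,\mathbb{Z}_{2^{(m-1)/2}}$ is a genuine transversal of $(1+i)2^{(m-1)/2}\mathbb{Z}[i]$ in $\mathbb{Z}[i]$, which indeed holds because a nonzero element of that sublattice has both coordinates of magnitude at least $2^{(m-1)/2}$ with coordinate sum an even multiple of $2^{(m-1)/2}$.
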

\begin{proof}
See the proof of Theorem 2 in \cite{HVB}.
\end{proof}

Using the results of Theorem \ref{thm2}, $BW_{2^{m}}$ is given by $BW_{2^{m}} = (1+i)^m \mathbb{Z}[i]^{2^{m}} \oplus \mathcal{L}_{2^{m}},$ where $\mathcal{L}_{2^{m}}$ is the lattice code obtained from $\mathcal{C}_{2^{m}}$ through the mapping $\chi = \phi(\Phi(\cdot))$ on $\mathcal{U}_{m}$.

\section{On the Error Performance of the SBWD}
\label{sec4}

In this section, we study the error performance of the SBWD in decoding the infinite BW lattice. In \cite{MiN}, it is shown that for $\textbf{x} \in BW_{2^{m}}$, if there exists $\textbf{y} \in \mathbb{C}^{2^{m}}$ such that $d^{2}_{min}(\textbf{x}, \textbf{y}) \leq \frac{N}{4}$, where $N = 2^{m}$, then the SBWD correctly finds (or decodes) the lattice point $\hat{\textbf{x}} = \textbf{x}$. In the context of using SBWD in AWGN channels, the vector $\textbf{y}$ corresponds to $\textbf{y} = \textbf{x} + \textbf{n}$, where $\textbf{x} \in BW_{2^{m}}$ and $n_{j} \sim \mathcal{CN}(0, \sigma^{2}) ~\forall j$. This implies that the codeword error rate (CER) of the SBWD given by $\mbox{Pr}(\hat{\textbf{x}} \neq \textbf{x})$ is upper bounded as
\begin{equation*}
\label{sub}
\mbox{Pr}(\hat{\textbf{x}} \neq \textbf{x}) \leq  \mbox{Pr}\left(|\textbf{n}|^{2} > \frac{N}{4}\right).
\end{equation*} 
Note that $\frac{\sqrt{N}}{2}$ is the packing radius of $BW_{2^{m}}$, and hence the above bound is the well known \emph{sphere upper bound} (SUB) \cite{ViB}. In \cite{MiN}, the focus was only on the complexity of the decoder but not on the analysis of the tightness of the SUB. In other words, the possibility of correct decision is not known when $|\textbf{n}|^{2} > \frac{N}{4}$. We study the error performance and show that the decoder is powerful in making correct decisions well beyond the packing radius. Without loss of generality, we study the error performance when the zero lattice point is transmitted. We analyze the SBWD algorithm and point out the reason for the improvement in the error performance (with reference to the SUB). We first recall the SBWD algorithm of \cite{MiN}.

\begin{flushleft}
\textbf{The Sequential BW Lattice Decoding Algorithm:}
\end{flushleft}
\vspace{0.2cm}
\begin{mdframed}
\noindent \textbf{function} SEQBW$(r$, $\textbf{y})$\\
$~~~$\textbf{if} $\textbf{y} \in \mathbb{C}^{N}$ and $N \leq 2^{r}$\\
$~~~$ $~~~$ return $\lceil \textbf{y} \rfloor$;\\
$~~~$\textbf{else}\\
$~~~$ $~~~$ $\textbf{b} = \lceil \Re(\textbf{y}) \rfloor + \lceil \Im(\textbf{y}) \rfloor \mbox{ mod } 2$;\\
$~~~$ $~~~$ $\rho = 1 - 2(\mbox{max}\left(|\lceil \Re(\textbf{y}) \rfloor - \Re(\textbf{y})|, |\lceil \Im(\textbf{y}) \rfloor| - \Im(\textbf{y})\right))$;\\
$~~~$ $~~~$ $\hat{\textbf{c}} = \mbox{RMDEC}(r, \textbf{b}, \rho)$;\\
$~~~$ $~~~$ $\textbf{v} = \mbox{SEQBW}(r+1, (\textbf{y} - \hat{\textbf{c}})/(1+i))$;\\
$~~~$ $~~~$ return $\hat{\textbf{\textbf{c}}} + (1+i)\textbf{v}$;\\
$~~~$\textbf{end if}\\
\textbf{end function}
\end{mdframed}
\vspace{0.3cm}
\indent The above decoder is a successive interference cancellation (SIC) type decoder which exploits the BW lattice structure as a multi-level code of nested RM codes (as per Construction $D$). At each level, the algorithm uses a variant of the soft-input RM decoder \cite{ScB} (denoted by the function RMDEC which is given as Algorithm 3 in \cite{MiN}) to decode, and cancel the RM codeword at that level. Therefore, the error performance of the SBWD is fundamentally determined by the error performance of the underlying soft-input RM decoders. In particular, we have
\begin{equation}
\label{lattice_rm_error_relation}
\mbox{Pr}(\hat{\textbf{x}} \neq \textbf{x}) = \mbox{Pr}\left(\bigcup_{r}  \mathcal{E}(\hat{\textbf{c}}_{r} \neq \textbf{c}_{r})\right),
\end{equation}
where $\mathcal{E}(\hat{\textbf{c}_{r}} \neq \textbf{c}_{r})$ denotes an error event while decoding $\mathcal{RM}(r, m)$. Hence, it is important to compute $\mbox{Pr}(\hat{\textbf{c}}_{r} \neq \textbf{c}_{r})$ for each $\mathcal{RM}(r, m)$. Along that direction, it is necessary to model the effective binary channel induced for each RM code $\mathcal{RM}(r, m)$. We propose a model for such a binary channel which is accurate for $r = 0$, while for $r \neq 0$, it neglects the error propagation in the SIC decoder algorithm. To decode the RM code at each level, a hard-decision binary value $b_{j}$ is obtained from $y_{j}$ as
\begin{equation}
\label{mod_1_plus_i}
b_{j} = \lceil \Re(y_{j}) \rfloor + \lceil \Im(y_{j}) \rfloor \mbox{ mod } 2.
\end{equation}
Due to the combination of the round and the modulo operation (henceforth referred to as the round-modulo operation) in \eqref{mod_1_plus_i}, the codewords of $\mathcal{RM}(r, m)$ are passed through a virtual binary channel with the cross-over probability given by,
\begin{equation*}
P_{c} = \mbox{Pr}(b_{j} = 1 ~|~ c_{j} = 0),
\end{equation*}
where $\textbf{c} \in \mathcal{RM}(r, m)$. Since the zero lattice point is transmitted, $\textbf{c}$ is the all zero codeword for each $\mathcal{RM}(r, m)$, and hence the relevant cross-over probability is $\mbox{Pr}(b_{j} = 1 ~|~ c_{j} = 0)$. The following theorem shows that $P_{c}$ can be upper bounded by a Jacobi-Theta function \cite{theta_func}.

\begin{figure}[h]
\centering
\includegraphics[width=3in]{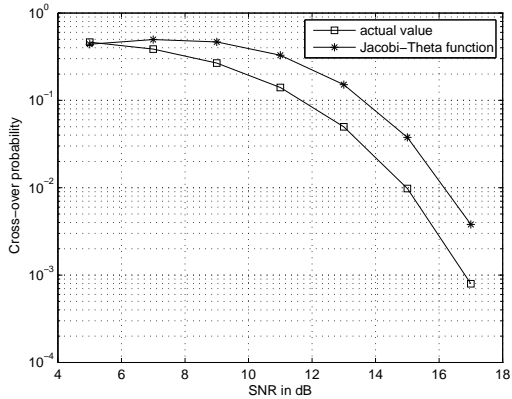}
\caption{Comparison of the cross-over probability with the upper bound using the Jacobi-Theta function}
\label{PC_JACOBI}
\end{figure}

\begin{theorem} The cross-over probability $P_{c}$ induced by the round-modulo operation in \eqref{mod_1_plus_i} is upper bounded as 
\begin{equation}
\label{jacobi_theta_bound}
P_{c} \leq \left( e^{-\frac{1}{4\sigma^{2}}}\right) \vartheta \left(\frac{i 4}{\pi \sigma^{2}}, \frac{i }{\pi \sigma^{2}}\right),
\end{equation}
where $\vartheta \left(z, \tau \right)$ is the Jacobi-Theta function given by
\begin{equation*}
\vartheta \left(z, \tau \right) = \sum_{a = -\infty}^{\infty} e^{\pi i a^{2} \tau + 2 \pi i a z}.
\end{equation*}
\end{theorem}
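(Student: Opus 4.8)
The plan is to translate the algebraic round-modulo rule of \eqref{mod_1_plus_i} into a geometric decision problem and then control the resulting Gaussian mass by a lattice sum. Since the all-zero codeword is the relevant transmitted word at this level, $y_j=n_j$ with $n_j\sim\mathcal{CN}(0,\sigma^2)$, and $b_j=1$ occurs exactly when $\lceil\Re(n_j)\rfloor+\lceil\Im(n_j)\rfloor$ is odd. First I would observe that independent coordinate rounding partitions $\mathbb{C}$ into unit squares centred at the Gaussian integers, and that the ``$\bmod 2$'' rule colours these squares as a checkerboard: squares centred at $a+bi$ with $a+b$ even give $b_j=0$ (the correct decision, whose central square contains the origin), while those with $a+b$ odd give $b_j=1$. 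Hence $P_c$ is precisely the probability that $n_j$ lands in a wrong-parity square.

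Because $\Re(n_j)$ and $\Im(n_j)$ are independent, the parity event decouples: writing $q=\Pr(\lceil\Re(n_j)\rfloor\text{ is odd})$ one obtains $P_c=2q(1-q)\le 2q$, so it suffices to bound the one-dimensional quantity $q=\sum_{n\ \mathrm{odd}}\Pr(X\in[n-\tfrac12,n+\tfrac12))$ for a single real Gaussian component $X$. I would then bound each interval probability by the corresponding Gaussian tail, $\Pr(X\in[n-\tfrac12,n+\tfrac12))\le\Pr(X\ge|n|-\tfrac12)$, and apply the Chernoff/$Q$-function estimate $\Pr(X\ge t)\le e^{-t^2/\sigma^2}$ for the per-component variance. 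The nearest wrong-parity integers are $\pm 1$, whose intervals begin at distance $\tfrac12$ from the origin, which produces the leading factor $e^{-(1/2)^2/\sigma^2}=e^{-1/(4\sigma^2)}$.

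The remaining step is to sum the geometric--Gaussian series over the odd integers. After pulling out the nearest-neighbour factor $e^{-1/(4\sigma^2)}$, the residual sum has the form $\sum_a e^{\alpha a^2+\beta a}$ with purely imaginary exponent parameters, which is by definition a Jacobi-Theta function $\vartheta(z,\tau)$; matching the quadratic and linear coefficients in the exponent to the defining series $\pi i a^2\tau+2\pi i a z$ identifies the pair $(z,\tau)$ and reproduces exactly the arguments claimed in \eqref{jacobi_theta_bound}. Collecting the factor and the series then yields the stated bound.

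The main obstacle is the bookkeeping that makes the inherently two-dimensional ``wrong-square'' probability collapse to a single-index theta series: one must verify that the checkerboard/parity constraint factors cleanly through the independence of $\Re(n_j)$ and $\Im(n_j)$, so that $2q(1-q)$ genuinely captures $P_c$, and then track the shift and scaling so that completing the square in the exponent yields precisely the $(z,\tau)$ in the statement. A secondary point is calibrating the tail estimate for $\Pr(X\in[n-\tfrac12,n+\tfrac12))$ to be loose enough to telescope into a clean theta sum yet tight enough to retain the dominant $e^{-1/(4\sigma^2)}$ behaviour that governs the high-SNR slope.
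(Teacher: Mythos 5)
Your overall route is the same as the paper's: condition on the all-zero codeword, reduce the two-dimensional parity (checkerboard) event to one real coordinate via $P_c = 2P_{\mbox{\small o}}(1-P_{\mbox{\small o}}) \le 2P_{\mbox{\small o}}$ (your $q$ is the paper's $P_{\mbox{\small o}}$ in \eqref{delta_odd}), bound the mass of each odd-rounding interval by a Gaussian tail, apply a Chernoff estimate, and resum into a theta series. Up to and including $P_c \le 2q$ your steps coincide exactly with the paper's proof.

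The gap is in the final resummation, and it is quantitative, not cosmetic. With your (pointwise correct) reflection $\Pr\left(X\in[n-\tfrac12,n+\tfrac12)\right)\le\Pr\left(X\ge |n|-\tfrac12\right)$, the odd integers $n=\pm1,\pm3,\ldots$ contribute each distance $\tfrac12,\tfrac52,\tfrac92,\ldots$ \emph{twice}, so even with the sharp Chernoff form $Q(x)\le\tfrac12 e^{-x^2/2}$ your chain yields
\begin{equation*}
P_c \;\le\; 2\sum_{a \ge 0} e^{-\frac{(2a+0.5)^2}{\sigma^2}},
\end{equation*}
whereas the right-hand side of \eqref{jacobi_theta_bound} equals the bilateral sum $\sum_{a\in\mathbb{Z}} e^{-(2a+0.5)^2/\sigma^2}$, whose terms run over the distances $\tfrac12,\tfrac32,\tfrac52,\tfrac72,\ldots$ each \emph{once}. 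Since $e^{-(2a+0.5)^2/\sigma^2} > e^{-(2a+1.5)^2/\sigma^2}$ term by term, your series is strictly larger than the claimed one, so ``matching the quadratic and linear coefficients'' cannot reproduce \eqref{jacobi_theta_bound} from your chain: as sketched, you prove a strictly weaker inequality. Moreover, your Chernoff bound as quoted, $\Pr(X\ge t)\le e^{-t^2/\sigma^2}$ without the $\tfrac12$ prefactor, loses an additional factor of $2$ that the paper needs in order to cancel the $2$ in $P_c \le 2P_{\mbox{\small o}}$. The paper lands exactly on \eqref{jacobi_theta_bound} by a different bookkeeping: it keeps the bilateral telescoping sum \eqref{delta_odd} indexed by the \emph{left} edge $2a+0.5$ of every interval with $a\in\mathbb{Z}$ (negative $a$ included), drops only the subtracted $Q$-terms to get \eqref{drop_q}, and then applies $Q(x)\le\tfrac12 e^{-x^2/2}$ to each surviving term in \eqref{apply_chernoff}. (That per-term step is itself delicate for $a\le -1$, where the argument of $Q$ is negative and the Chernoff bound does not hold pointwise --- your near-edge repair is the rigorous pointwise estimate --- but the specific constant and index set appearing in \eqref{jacobi_theta_bound} arise from the paper's left-edge indexing, and your final identification step does not go through as asserted.)
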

\begin{proof}
We first compute $P_{c}$, and then propose an upper bound. To assist compute $P_{c}$, we compute the probability that $\Re(y_{j})$ (or $\Im(y_{j})$) falls within an interval $(z - 0.5, z +0.5]$ centred around an integer $z$, when $c_{j} = 0$. Since the additive noise is circularly symmetric, it is sufficient to calculate the above probability for either $\Re(y_{j})$ or $\Im(y_{j})$. We use $y$ to denote either $\Re(y_{j}$) or $\Im(y_{j})$. For the odd integer case, we have
\begin{eqnarray}
\label{delta_odd}
P_{\mbox{\small{o}}} & \triangleq & \sum_{a = -\infty}^{\infty} \mbox{Pr}\left( 2a + 0.5 < y \leq 2a + 1.5 \right), \nonumber \\
&  = & \sum_{a = -\infty}^{\infty} \left[ \int_{2a + 0.5}^{2a + 1.5}  \mbox{P}_{\textbf{y}}(y) dy \right], \nonumber \\
& = & \sum_{a = -\infty}^{\infty} \left[ Q\left(\frac{2a + 0.5}{\sigma/\sqrt{2}}\right) - Q\left(\frac{2a + 1.5}{\sigma/\sqrt{2}}\right) \right],
\end{eqnarray}
where $\mbox{P}_{\textbf{y}}(y)$ is the probability density function of $y$, $Q(x) = \frac{1}{\sqrt{2 \pi}} \int_{x}^{\infty} e^{-\frac{u^2}{2}} du$,  and $\sigma^2/2$ is the variance of $y$.
For the even integer case, we have
\begin{eqnarray}
\label{delta_even}
P_{\mbox{\small{e}}} & \triangleq & \sum_{a = -\infty}^{\infty} \mbox{Pr}\left( 2a - 0.5 < y \leq 2a + 0.5 \right), \nonumber \\
&  = & \sum_{a = -\infty}^{\infty} \left[ \int_{2a - 0.5}^{2a + 0.5}  \mbox{P}_{\textbf{y}}(y) dy \right], \nonumber \\
&  = & \sum_{a = -\infty}^{\infty} \left[ Q\left(\frac{2a - 0.5}{\sigma/\sqrt{2}}\right) - Q\left(\frac{2a + 0.5}{\sigma/\sqrt{2}}\right) \right].
\end{eqnarray} 
Note that $b_{j}$ is $1$ whenever $\lceil \Re(y_{j}) \rfloor + \lceil \Im(y_{j}) \rfloor$ is an odd number. This can happen when ({\em i}) $\lceil \Re(y_{j}) \rfloor$ is odd and $\lceil \Im(y_{j}) \rfloor$ is even, or
({\em ii}) $\lceil \Re(y_{j}) \rfloor$ is even and $\lceil \Im(y_{j}) \rfloor$ is odd. From \eqref{delta_odd} and \eqref{delta_even}, we can write
\begin{eqnarray}
P_{c} & = & P_{\mbox{\small{o}}}(1 - P_{\mbox{\small{o}}}) + (1 - P_{\mbox{\small{o}}})P_{\mbox{\small{o}}},\\
& = & 2P_{\mbox{\small{o}}} - 2(P_{\mbox{\small{o}}})^{2}.
\end{eqnarray}
By dropping the term $2(P_{\mbox{\small{o}}})^{2}$, we upper bound $P_{c}$ as
\begin{eqnarray}
P_{c} & \leq & 2P_{\mbox{\small{o}}}, \nonumber \\
& \leq & 2 \sum_{a = -\infty}^{\infty} \left[ Q\left(\frac{2a + 0.5}{\sigma/\sqrt{2}}\right)\right] \label{drop_q},\\
& \leq & \sum_{a = -\infty}^{\infty} e^{-\frac{(2a + 0.5)^{2}}{\sigma^{2}}} \label{apply_chernoff}, \\
& = & e^{-\frac{(0.5)^{2}}{\sigma^{2}}} \sum_{a = -\infty}^{\infty} e^{\frac{-4a^{2} - 2a}{\sigma^{2}}}, \nonumber\\
& = & \left( e^{-\frac{1}{4\sigma^{2}}}\right) \vartheta \left(\frac{i 4}{\pi \sigma^{2}}, \frac{i }{\pi \sigma^{2}}\right), \nonumber
\end{eqnarray}
where the bound in \eqref{drop_q} comes from dropping the terms of the form $Q\left(\frac{2a + 1.5}{\sigma/\sqrt{2}}\right)$ in \eqref{delta_odd}, and the bound in \eqref{apply_chernoff} is due to the Chernoff bound $Q(x) \leq \frac{1}{2} e^{\frac{-x^{2}}{2}}$. 
\end{proof}

Note that the Jacobi-Theta function can be evaluated at any pair ($\tau$, $z$). In Fig. \ref{PC_JACOBI}, the empirical values of $P_{c}$ are presented along with the bound in \eqref{jacobi_theta_bound} for various values of $\mbox{SNR} = \frac{1}{\sigma^{2}}$. We point out that the bound is not tight due to the Chernoff-bound on each $Q(\cdot)$ function.

\indent It is well known that $P_{c}$ determines the error-performance of a hard decision decoder. Since we have a soft-input decoder, we need to obtain the relevant statistics on the soft inputs. 
We now study the soft-input RM decoder used in the SBWD. Unlike the codewords of RM code in \cite{ScB}, the RM codewords at each level of BW lattice take values over $\{0, 1\}$.
The soft-input used for the RM decoder is $\rho = 1 - 2\textbf{d},$ where $\textbf{d} = \mbox{max}\left(|\lceil \Re(\textbf{y}) \rfloor - \Re(\textbf{y})|, |\lceil \Im(\textbf{y}) \rfloor - \Im(\textbf{y})|\right)$. Also, unlike the soft metric in \cite{ScB}, $\rho_{j}$ is bounded in the interval $[0, 1]$. This is because $d_{j} \in [0, 0.5]$, which is a result of the round-modulo operation in \eqref{mod_1_plus_i}. One could imagine $\textbf{b}$ and $\rho$ to be obtained from the received vector in a virtual additive noise channel, wherein each component of the received vector is always within a distance of $0.5$ from either $0$ or $1$. Therefore, if $\textbf{c}$ denotes a RM codeword at a particular level of the transmitted BW lattice point, then the effective noise $\textbf{n}^{eff}$ as seen by the soft-input RM decoder at that level is of the form,
\begin{eqnarray}
\label{effective_noise}
n^{eff}_{j} = \left\{ \begin{array}{ccccc}
d_{j}, \mbox{ when } b_{j} = c_{j};\\
1 - d_{j}, \mbox{ when } b_{j} \neq c_{j};\\
\end{array} 
\right.
\end{eqnarray}
for $1 \leq j \leq N$. Note that $n^{eff}_{j}$ has bounded support in the interval $\left[0, 1\right]$. 
%
%
For an analogy with respect to the model in \cite{ScB}, the code alphabet $\{ 0, 1\}$ in \cite{MiN} corresponds to the code alphabet $\{ -1, 1\}$ in \cite{ScB} and the effective noise $\textbf{n}^{eff}$ in \cite{MiN} corresponds to the AWGN in \cite{ScB}. At each level of the BW lattice, the lattice code $(1+i)^{r}\mathcal{RM}(r, m)$ for any $0 \leq r \leq m - 1$ has the minimum squared Euclidean distance of $N$. By using the proposition in Section IV.A of \cite{ScB}, the probability of incorrect decision of the soft-input RM decoder at each level of SBWD is upper bounded as shown in the proposition below.

\begin{proposition}
The codeword error rate $\mbox{Pr}(\hat{\textbf{c}}_{r} \neq \textbf{c}_{r})$ for each $\mathcal{RM}(r, m)$ is upper bounded as,
\begin{equation}
\label{new_upper_bound}
\mbox{Pr}(\hat{\textbf{c}}_{r} \neq \textbf{c}_{r}) \leq \mbox{Pr}\left(|\textbf{n}^{eff}|^{2} > \frac{N}{4}\right) \mbox{ for } r = 0, 1, \ldots, m-1.\\
\end{equation}
\end{proposition}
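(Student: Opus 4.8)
The plan is to obtain the bound as a direct consequence of the bounded-distance decoding (BDD) guarantee of the soft-input Reed-Muller decoder of \cite{ScB}, combined with a minimum-distance computation for the code decoded at each level. The key fact I would invoke is that the soft-input RM decoder (the function RMDEC) is itself a bounded-distance decoder in the Euclidean metric: it returns the transmitted codeword whenever the vector of effective-noise samples has squared Euclidean norm strictly below one quarter of the minimum squared Euclidean distance of the code. This is precisely the statement established in the proposition of Section IV.A of \cite{ScB}, and it is the workhorse of the argument. Intuitively it rests on the triangle inequality: if $\textbf{c}_r$ is transmitted and the soft received vector is $\textbf{c}_r + \textbf{n}^{eff}$, then for any competing codeword $\textbf{c}' \neq \textbf{c}_r$ one has $|\textbf{c}' - (\textbf{c}_r + \textbf{n}^{eff})| \geq d_{\min} - |\textbf{n}^{eff}| > d_{\min}/2 > |\textbf{n}^{eff}|$ as soon as $|\textbf{n}^{eff}| < d_{\min}/2$, so that $\textbf{c}_r$ remains the unique closest codeword and the decoder cannot err.

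First I would translate the level-$r$ decoding subproblem of the SBWD into the setting of \cite{ScB}, using the identifications spelled out just above the proposition: the binary alphabet $\{0,1\}$ of \cite{MiN} corresponds to the antipodal alphabet $\{-1,1\}$ of \cite{ScB}, and the effective-noise vector $\textbf{n}^{eff}$ defined in \eqref{effective_noise} plays the role of the additive Gaussian noise in \cite{ScB}; under this identification RMDEC at level $r$ is exactly the decoder to which the Section IV.A proposition applies. Next I would compute the relevant minimum distance. The code $\mathcal{RM}(r,m)$ has minimum Hamming distance $2^{m-r}$, and its level-$r$ embedding into the lattice multiplies every coordinate by $(1+i)^r$, scaling squared Euclidean distances by $|1+i|^{2r} = 2^r$. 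Hence $(1+i)^r\mathcal{RM}(r,m)$ has minimum squared Euclidean distance $2^{m-r}\cdot 2^r = 2^m = N$, independently of $r$, and substituting $d_{\min}^2 = N$ into the BDD threshold $d_{\min}^2/4$ produces the radius $N/4$ appearing in the statement.

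Combining these two ingredients, the decoder at level $r$ can err only if the effective noise leaves the bounded-distance region, that is $\mbox{Pr}(\hat{\textbf{c}}_r \neq \textbf{c}_r) \leq \mbox{Pr}(|\textbf{n}^{eff}|^2 > N/4)$, which is the claim. At this point I would emphasize the caveat already flagged in the surrounding text: for $r \neq 0$ the model neglects error propagation, so the bound is derived under the assumption that levels $0,\dots,r-1$ were decoded correctly, which is exactly what guarantees that the level-$r$ received vector has the clean form $\textbf{c}_r + \textbf{n}^{eff}$ to which the BDD argument applies.

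The main obstacle is keeping the scaling consistent. The effective noise in \eqref{effective_noise} is defined at the current recursion scale, where each sample lies in $[0,1]$, whereas the minimum-distance figure $N$ refers to the lattice-scaled code $(1+i)^r\mathcal{RM}(r,m)$; one must express the noise and the code distance in a single common scale so that the lifting factor $2^r$ cancels and the threshold emerges as $N/4$ uniformly in $r$. A secondary delicate point is that RMDEC is a recursive soft-decision decoder rather than a genuine maximum-likelihood decoder, so its BDD guarantee cannot be taken for granted from first principles and must be imported as the cited proposition of \cite{ScB}; checking that its hypotheses match the $\textbf{n}^{eff}$ model, in particular the bounded support of $\textbf{n}^{eff}$ in $[0,1]$, is what legitimizes the application.
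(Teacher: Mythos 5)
Your proposal takes essentially the same route as the paper: the paper also obtains the bound by invoking the proposition in Section IV.A of \cite{ScB} under the stated analogy (code alphabet $\{0,1\}$ of \cite{MiN} versus $\{-1,1\}$ of \cite{ScB}, with $\textbf{n}^{eff}$ playing the role of the AWGN), together with the fact that $(1+i)^{r}\mathcal{RM}(r,m)$ has minimum squared Euclidean distance $N$ at every level, which gives the bounded-distance threshold $N/4$. Your explicit computation $2^{m-r}\cdot 2^{r} = N$, the common-scale bookkeeping, and the caveat that error propagation is neglected for $r \neq 0$ merely spell out details the paper leaves implicit, so there is no substantive difference in approach.
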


\indent It is important to note that the above bound is different from $\mbox{Pr}(|\textbf{n}|^{2} > \frac{N}{4})$ since $\textbf{n}$ is Gaussian distributed. We do not have closed form expression on the distribution of either $n_{j}^{eff}$ or $|n_{j}^{eff}|^{2}$. In Fig. \ref{hist}, we display the histogram of the realizations of $n_{j}^{eff}$ for various values of $\sigma^{2}$, when the zero RM codeword is the transmitted. Note that for $\sigma^{2}$ = $0$ dB, the histogram of $n_{j}^{eff}$ has the triangular shape centred around $0.5$, which implies a very high (close to 0.5) cross-over probability when obtaining the hard decision vector $\textbf{b}$. On the other hand, at lower values of $\sigma^{2}$, the distribution is skewed towards zero indicating smaller cross-over probability.

\begin{figure}
\centering
\includegraphics[width=3.5in]{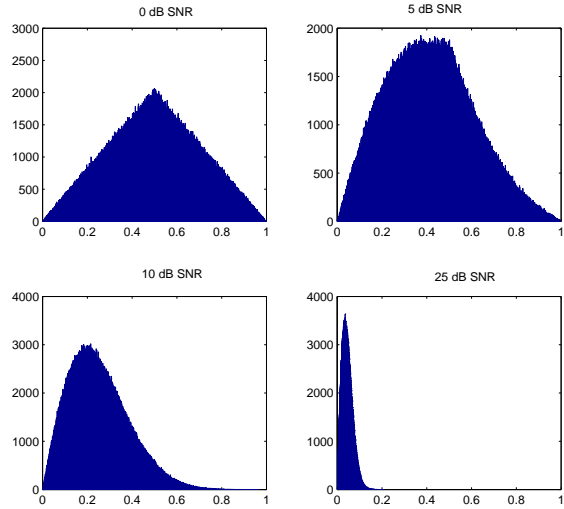}
\caption{Histogram of $n^{eff}_{j}$ when SNR = $\frac{1}{\sigma^{2}}$ takes the values 0 dB, 5 dB, 10 dB and 25 dB.}
\label{hist}
\end{figure}

%
%
%
%

\section{SBWD to Decode BW Lattice Code $\mathcal{L}_{2^{m}}$ for AWGN Channel}
\label{sec5}

In this section, we discuss the use of SBWD to decode the lattice code $\mathcal{L}_{2^{m}}$. First, we describe a method to transmit the codewords of $\mathcal{L}_{2^{m}}$. For any $\textbf{x} \in \mathcal{L}_{2^{m}}$, the transmitted vector is of the form\footnote{The transmitted vector is offset by a constant $c$ towards the origin to reduce the average transmit energy.} 
\begin{equation}
\label{off_set}
\textbf{x}_{t} = \left( 2\textbf{x} - c \right),
\end{equation}
where
\begin{eqnarray}
c = \left\{ \begin{array}{ccccc}
\left(2^{\frac{m}{2}} - 1\right) + i \left(2^{\frac{m}{2}} - 1\right), \mbox{ when } m \mbox{ is even};\\
\left(2^{\frac{m+1}{2}} - 1\right)+ i \left(2^{\frac{m-1}{2}} - 1\right), \mbox{ when } m \mbox{ is odd}.\\
\end{array} 
\right.
\end{eqnarray}
Using the scale and the shift operation in \eqref{off_set}, each component of $\textbf{x}_{t}$ takes value from the regular $2^{m}$-QAM constellation. In particular, the QAM constellation is square and non-square when $m$ is even and odd, respectively. When $\textbf{x}_{t}$ is transmitted, the received vector $\bar{\textbf{y}}$ is given by
\begin{equation}
\label{actual_awgn_channel}
\bar{\textbf{y}} = \textbf{x}_{t} + \bar{\textbf{n}},
\end{equation}
where $\bar{\textbf{n}}$ is the AWGN with $\bar{n}_{j} \sim \mathcal{CN}(0, \sigma^{2}) ~\forall j$. In this section, SNR of the channel is defined as $E_{s}/\sigma^2$, where $E_{s}$ denotes the average energy of $2^{m}$-QAM constellation. With the inverse operation to \eqref{off_set} as
$\textbf{y} = \frac{1}{2}\bar{\textbf{y}} + c$, the equivalent AWGN channel becomes
\begin{equation}
\label{eq_awgn_channel}
\textbf{y} = \textbf{x} + \textbf{n},
\end{equation}
where $\textbf{x} \in \mathcal{L}_{2^{m}}$ and $n_{j} \sim \mathcal{CN}(0, \frac{\sigma^{2}}{4})$. We use the SBWD \cite{MiN} on \eqref{eq_awgn_channel} to decode the lattice code $\mathcal{L}_{2^{m}}$.  When a codeword of $\mathcal{L}_{2^{m}}$ is transmitted, the SBWD decodes to a lattice point in the infinite lattice $BW_{2^{m}}$. In such a decoding method, irrespective of whether the decoded lattice point falls in the code or not, the information bits can be recovered from the decoded RM codewords at every level of SBWD (as shown in the algorithm in Sec. \ref{sec4}).

\subsection{Simulation results on the codeword error rate (CER) of SBWD}
\label{sec5_subsec1}

In this subsection, we present the CER of the SBWD along with some upper bounds and lower bounds. For the simulation results, we use $\mbox{SNR} = E_{s}/\sigma^2$, where $E_{s}$ denotes the average energy of the regular $2^{m}$-QAM constellation. In each of Fig. \ref{ILD_m_2}-\ref{ILD_m_10}, we present ({\em i}) the CER of the SBWD, ({\em ii}) the SUB (Section IV.D, \cite{ViB}), ({\em iii}) the sphere lower bound (SLB) (Section IV.D, \cite{ViB}), ({\em iv}) the CER in decoding $\mathcal{RM}(0, m)$ at the first level of the SBWD, and ({\em v}) the upper bound on the CER in decoding $\mathcal{RM}(0, m)$ given by $\mbox{Pr}(|\textbf{n}^{eff}|^{2} > \frac{N}{4})$ (obtained through simulation results by empirically generating $\textbf{n}^{eff}$).

\begin{figure}[h]
\centering
\includegraphics[width=3.5in]{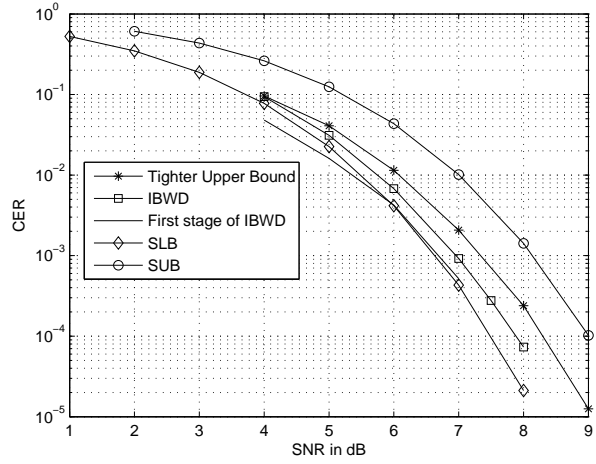}
\caption{CER of SBWD decoding $BW_{4}$.}
\label{ILD_m_2}
\end{figure}


\begin{figure}[h]
\centering
\includegraphics[width=3.5in]{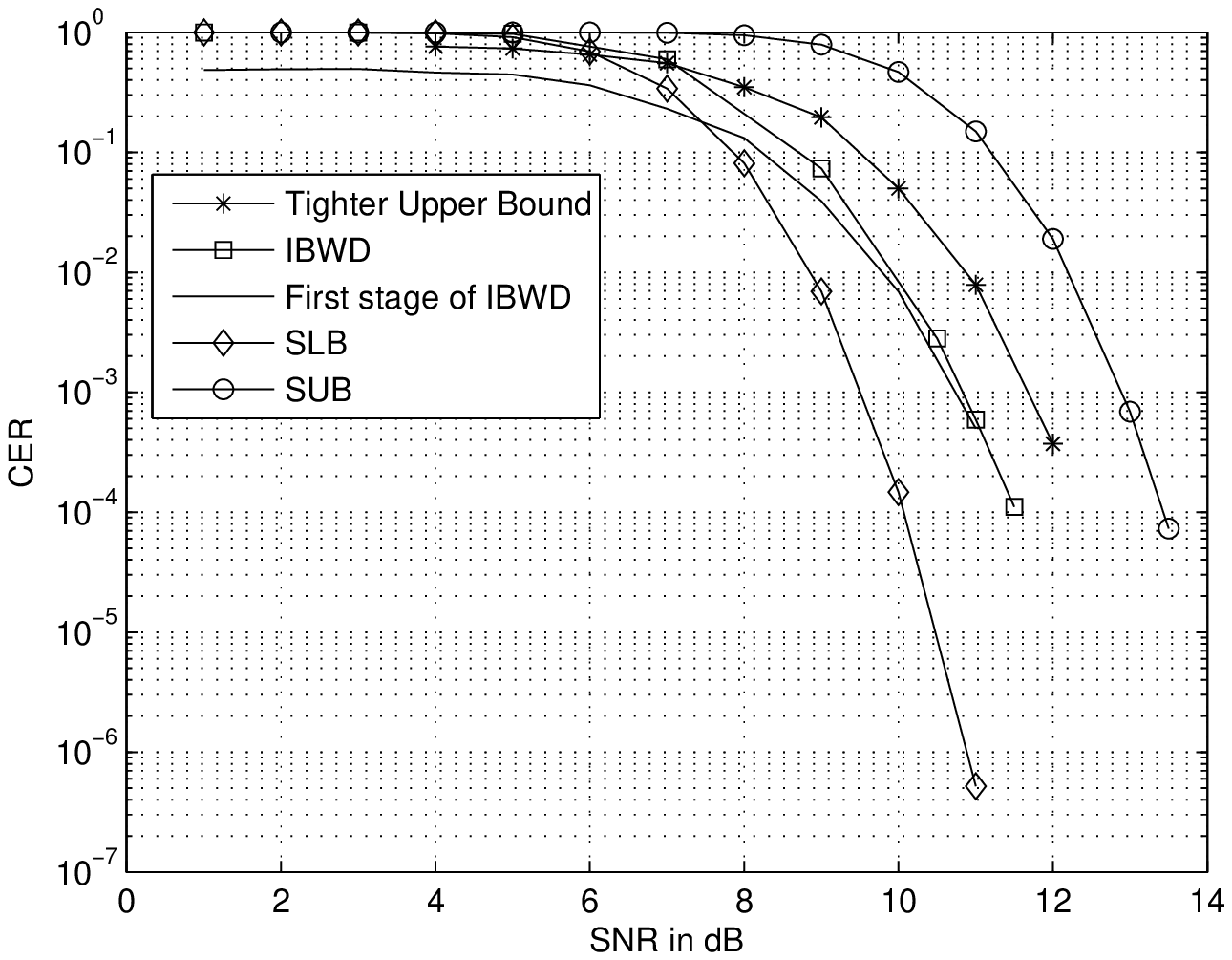}
\caption{CER of SBWD for decoding $BW_{16}$.}
\label{ILD_m_4}
\end{figure}


\begin{figure}[h]
\centering
\includegraphics[width=3.5in]{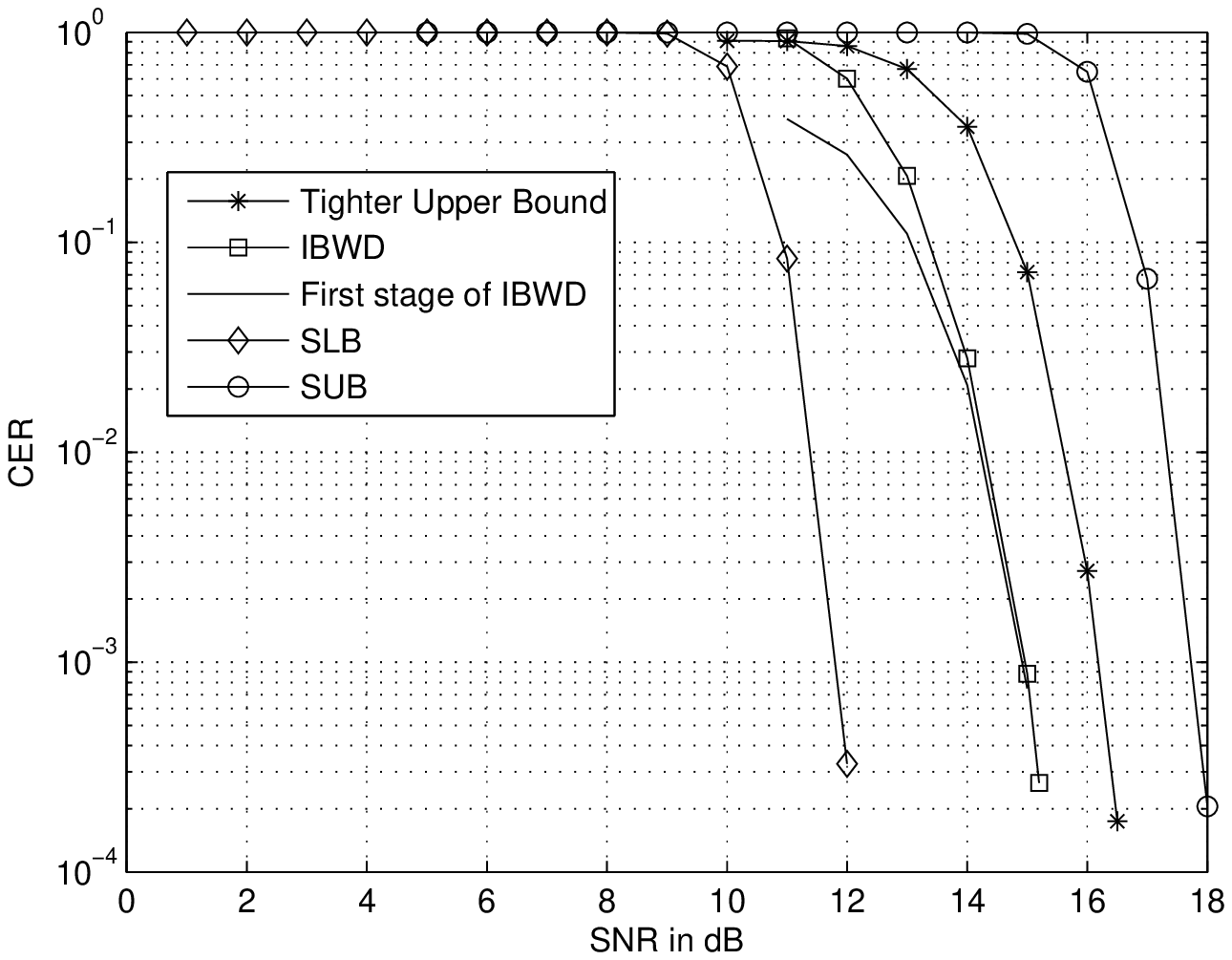}
\caption{CER of SBWD for decoding $BW_{64}$.}
\label{ILD_m_6}
\end{figure}


\begin{figure}[h]
\centering
\includegraphics[width=3.5in]{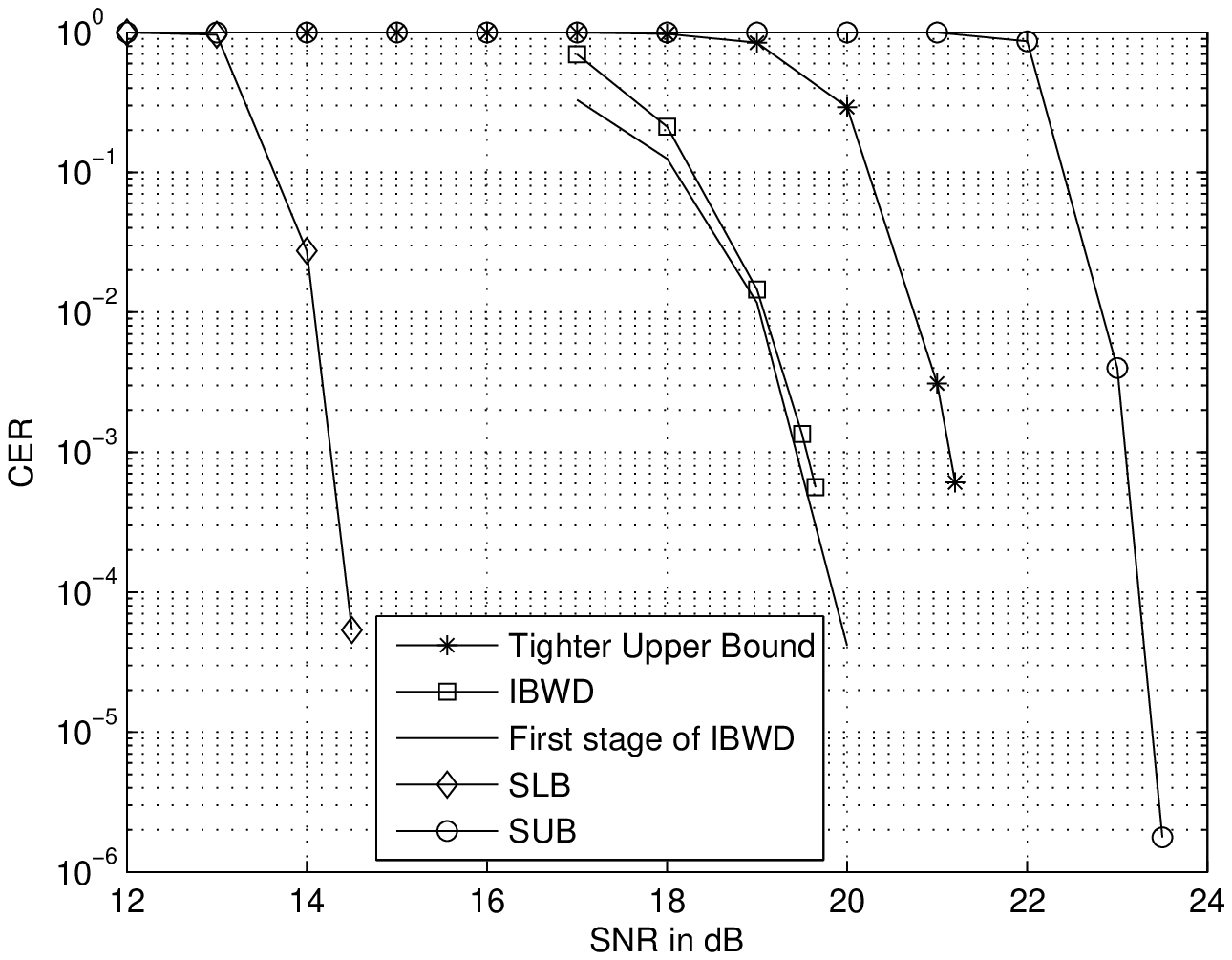}
\caption{CER of SBWD for decoding $BW_{256}$.}
\label{ILD_m_8}
\end{figure}


\begin{figure}[h]
\centering
\includegraphics[width=3.5in]{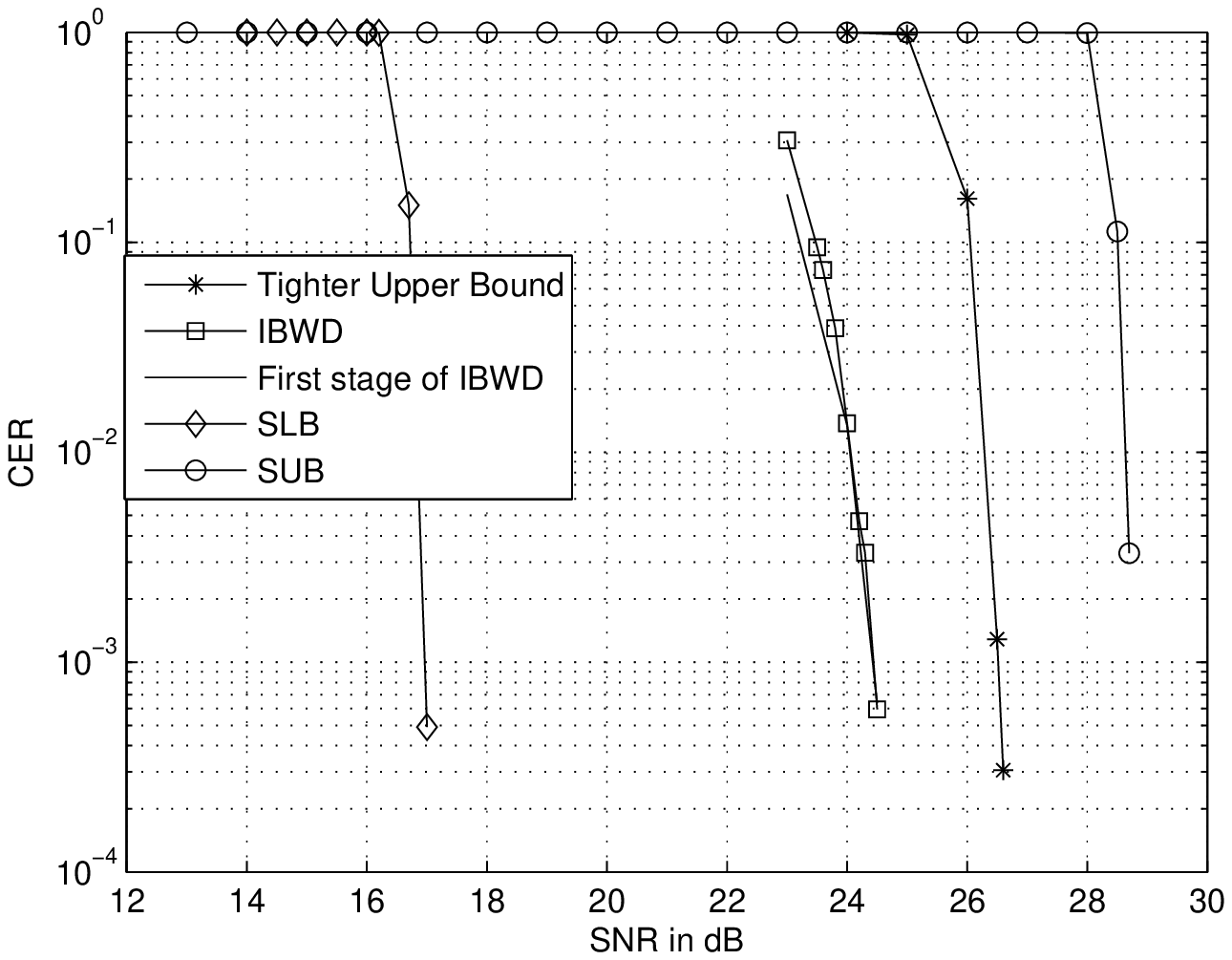}
\caption{CER of SBWD for decoding $BW_{1024}$.}
\label{ILD_m_10}
\end{figure}



From Fig. \ref{ILD_m_2}-\ref{ILD_m_10}, we make the following observations: the SUB is not a tight upper bound on the CER of SBWD. Also, $\mbox{Pr}(|\textbf{n}^{eff}|^{2} > \frac{N}{4})$ is an upper bound on the CER of SBWD, and in particular, it is a tighter upper bound than the SUB. The CER of the soft-input RM decoder for $\mathcal{RM}(0, m)$ is a tight lower bound on the CER of the SBWD. This implies that if there is no error at the first level of the decoder, then with high probability, there will be no errors at subsequent levels of the soft-input RM decoder. In summary, the simulation results highlight that the SBWD is quite powerful in making correct decisions even beyond the packing radius, and the deviation from the SUB increases for larger dimensions. As a result SBWD can be employed to efficiently decode lattice codes of large block lengths with low-complexity. This behaviour in the error performance of SBWD was not known in the literature.

\subsection{Comparing the complexity of the SBWD with the list decoder in \cite{GrP}}
\label{sec5_subsec2}

In this subsection, we compare the complexity of the SBWD with the BW list decoder \cite{GrP}. For a fair comparison, we assume that the list decoder is implemented on a single processor. On a single processor, the complexity of the SBWD is $O(N\mbox{log}^{2}(N))$, whereas the complexity of the list decoder is $O(N^{2})(\mathit{l}(m, \eta))^2$, where $\mathit{l}(m, \eta)$ is the worst case list size at a relative squared distance of $\eta$ (the relative squared distance is the squared Euclidean distance normalized by the dimension of the lattice). We compare the complexity of the two decoders for a codeword error rate of $10^{-3}$. In particular, we first approximate the error performance of the SBWD as a bounded distance decoder for some radius $\bar{\eta}$, and then compute the complexity of the list decoder with the corresponding value of $\bar{\eta}$. In Table \ref{table_complexity}, we display the lower bound (as given in Theorem 1.3 in \cite{GrP}) on the complexity of the list decoder to achieve the error performance of SBWD. The table shows that the list decoder has higher complexity than the SBWD to achieve the same performance. In summary, for single processor implementation, SBWD can be preferred to the list decoder to decode BW lattice codes of large block lengths. However, for codeword error rates lower than that of SBWD, the list decoder has to be used, preferably on parallel processors. Table \ref{table_complexity} also shows the potential of SBWD to decode well beyond the relative squared distance of $\eta = 0.25$. For complex dimensions of $256$ and $1024$, the effective radius of SBWD is as high as $\frac{N}{2}$ and $\frac{2N}{3}$, respectively.

\begin{center}
\begin{table}
\caption{Complexity of the list decoder \cite{GrP} to achieve the performance of SBWD}
\begin{center}
\begin{tabular}{|c|c|c|c|c|c|c|c|c|c|c|}
\hline Dimension $N$  & $\bar{\eta}$ & A lower bound on $N^{2}(\mathit{l}(m, \bar{\eta}))^2$ & $N\mbox{log}^{2}(N)$ (complexity of SBWD)\\
\hline 4 &  0.33 & 16 & 16 \\
\hline 16 &  0.4 & 256 & 256 \\
\hline 64 &  0.48 & 4096 &  2304 \\
\hline 256 &  0.56 & 262144 & 16384 \\
\hline 1024 & 0.67 & $1.07 \times 10^{9}$ & 102400\\
\hline
\end{tabular}
\end{center}
\label{table_complexity}
\end{table}
\end{center}

\section{Noise Trimming Technique for the SBWD}
\label{sec6}

When a codeword of $\mathcal{L}_{2^{m}}$ is transmitted, the SBWD decodes to a lattice point in the infinite lattice $BW_{2^{m}}$. In such a decoding method, irrespective of whether the decoded lattice point falls in the code or not, the information bits can be recovered from the decoded RM codewords at every level of SBWD (as shown in the algorithm in Sec. \ref{sec4}). To further improve the error performance, we force the SBWD to specifically decode to a codeword in the lattice code, and subsequently recover the information bits, with more reliability. We refer to such a decoder as the BW lattice code decoder (BWCD). We use a technique that forces the SBWD to decode to a codeword in the lattice code $\mathcal{L}_{2^{m}}$. We refer to this technique as the noise trimming technique, which exploits the structure of $\mathcal{L}_{2^{m}}$. From \eqref{mod_box}, we know that each component of a codeword is within a rectangular box $\mathcal{B} \subseteq \mathbb{C}$. In particular, the box $\mathcal{B}$ shares its edges with $\mathbb{Z}_{2^{\frac{m}{2}}}[i]$ and $\mathbb{Z}_{2^{\frac{m+1}{2}}} + i\mathbb{Z}_{2^{\frac{m-1}{2}}}$ when $m$ is even and odd, respectively. In order to use SBWD, and to decode to a codeword within the code, we \emph{trim} the in-phase and quadrature components of the received vector (the algorithm is given below) to be within a box $\mathcal{B}^{\prime} \supseteq \mathcal{B}$ marginally larger than $\mathcal{B}$ by length $\epsilon$ on each dimension. Then, we feed the trimmed received vector to the SBWD and decode the information bits. Note that the choice of $\epsilon$ is crucial to decode a codeword within the code, and to improve the BER with reference to the SBWD. We now provide an algorithm for the trimming method, which works independently on the in-phase and quadrature component of the scalars in $\textbf{y} = [y_{1}, y_{2}, \ldots, y_{2^{m}}]$ in \eqref{eq_awgn_channel}. In particular, the algorithm presented in the sequel works on the in-phase and quadrature component of $y_{j}$ when $m$ is even. Extension to the case when $m$ is odd is straightforward.

\begin{flushleft}
\textbf{Algorithm for the trimming technique when $m$ is even:}
\end{flushleft}

\begin{mdframed}
\begin{flushleft}
Input $y \in \mathbb{R}$ (either $\Re(y_{j})$ or $\Im(y_{j})$)\\
\end{flushleft}

\noindent \textbf{function} TRIM$(y$, $\epsilon)$\\
$~~~$ $\Delta$ = $(2^{\frac{m}{2}} - 1)/2$\\
$~~~$ $r$ = $y$ - $\Delta$\\
$~~~$ $t$ = $\Delta$ + $\epsilon$\\
$~~~$ \textbf{if} $|r| > t$\\
$~~~$ $~~~$ $s$ = $t/|r|$\\
$~~~$ $~~~$ $b = s \times r$\\
$~~~$ \textbf{else}\\
$~~~$ $~~~$ $b = r$\\
$~~~$ \textbf{end if}\\
$~~~$ return $b$ + $\Delta$\\
\textbf{end function}
\end{mdframed}
\vspace{.2cm}

Using BWCD, we have obtained BER for dimensions when $m = 2, 4, \mbox{ and } 6$, and compared them with the BER of the SBWD. The plots as shown in Fig. \ref{ber_ibwd_fbwd} indicate that BWCD outperforms SBWD by 0.5 dB. For the presented results, we have used $\epsilon = \frac{1}{2\sqrt{2}}$, which corresponds to the packing radius of $\frac{\sqrt{N}}{2}$. The above value of $\epsilon$ was optimized based on the simulation results by comparing the BER for various values of $\epsilon$.  Intuitively, trimming the received vector to fall within the packing radius of a lattice point in the edge of the lattice code forces to SBWD to decode to a lattice point in the edge of the code rather than a lattice point outside the lattice code.

\begin{figure}[h]
\centering
\includegraphics[width=3.5in]{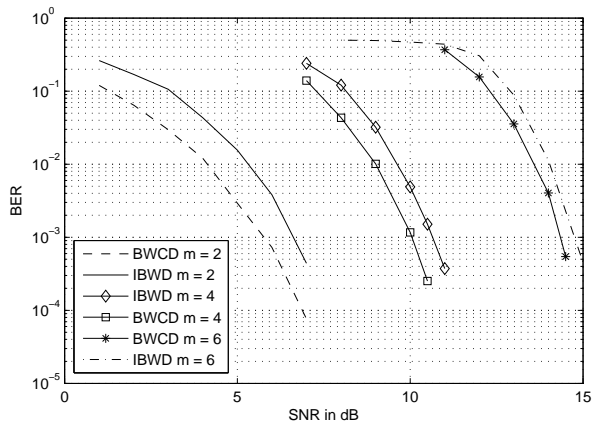}
\caption{Comparison of BER between BWCD and SBWD for $m = 2, 4, \mbox{ and } 6$.}
\label{ber_ibwd_fbwd}
\end{figure}

\section{Conclusion and Directions for Future Work}
\label{sec7}

In the first part of this paper, we have introduced a new method of encoding complex BW lattices, which facilitates bit labelling of BW lattice points. As a generalization, the proposed technique is applicable to encode all Construction $D$ complex lattices. In the second part of this paper, we have used complex BW lattice codes for communication over AWGN channels. To encode the code, we have used Construction $A^{\prime}$, and to decode the code we have used the SBWD. We have studied the error performance of the SBWD, and have shown that the Jacobi-Theta functions can characterize the virtual binary channels that arise in the decoding process. We have also shown that the SBWD is powerful in making correct decisions beyond the packing radius. Subsequently, we have used the SBWD to decode the complex lattice code through the noise trimming technique. This is the first work that uncovers the potential of SBWD (in terms of the error performance) in decoding lattice codes of large-block lengths with low-complexity. This work can be extended in one of the following ways:
\begin{itemize}
\item The SBWD proposed in \cite{MiN} uses a soft-input, hard-output RM decoder at each level of Construction $D$. It will be interesting to study the error performance of the lattice decoder with soft-input, soft-output iterative RM decoders.
\item We have presented the error performance of the SBWD through simulation results, and hence we now know the SBWD error performance with reference to the sphere lower bound and the sphere upper bound. A closed form expression on the error performance of the SBWD could be obtained for a better understanding of the decoder performance.
\end{itemize}
\section*{Acknowledgment}
This work was performed within the Monash Software Defined Telecommunications Lab and supported by the Monash Professional Fellowship 2012-2013 and DP 130100103.

\end{document}